\newcommand{\ket}[1]{|#1\rangle}
\renewcommand{\P}[1] {
    \mathbf{P}\text{-}{#1}
}
\newcommand{\SUPERPOLY}[1] {
    \mathbf{SUPERPOLY}\text{-}{#1}
}
\newcommand{\SUBEXP}[2] {
    \mathbf{SUBEXP}^{#1}\text{-}{#2}
}
\newcommand{\BPP}[2] {
    \mathbf{BPP}_{#1}\text{-}{#2}
}
\newcommand{\BSUPERPOLY}[2] {
    \mathbf{BSUPERPOLY}_{#1}\text{-}{#2}
}
\newcommand{\BSUBEXP}[3] {
    \mathbf{BSUBEXP}^{#1}_{#2}\text{-}{#3}
}
\newcommandx{\OBDD}[2][1=1, 2]{
	\ifthenelse{\equal{#2}{}}{ }{#2\text{-}}\ifthenelse{\equal{#1}{1}}{
    }{#1\text{-}}\mathrm{OBDD}
}
\newcommandx{\NOBDD}[2][1=1, 2]{
	\ifthenelse{\equal{#2}{}}{ }{#2\text{-}}\ifthenelse{\equal{#1}{1}}{
    }{#1\text{-}}\mathrm{NOBDD}
}
\newcommandx{\POBDD}[2][1=1, 2]{
	\ifthenelse{\equal{#2}{}}{ }{#2\text{-}}\ifthenelse{\equal{#1}{1}}{
    }{#1\text{-}}\mathrm{POBDD}
}
\newcommandx{\QOBDD}[2][1=1, 2]{
	\ifthenelse{\equal{#2}{}}{ }{#2\text{-}}\ifthenelse{\equal{#1}{1}}{
    }{#1\text{-}}\mathrm{QOBDD}
}
\newcommandx{\BP}[1][1=1]{
	\ifthenelse{\equal{#1}{1}}{}{#1\text{-}}\mathrm{BP}
}
\newcommand{\id}{\mathrm{id}}
\newcommandx{\C}[3][1, 3]{
    C_{#3}^{#1}({#2})
}
\newcommand{\set}[2][ ]{\left\{#2 \ifthenelse{\equal{#1}{ }}{ }{~|~#1}\right\}}
\newcommand{\bool}{\set{0, 1}}
\newcommand{\bin}[1]{\mathrm{bin}\left(#1\right)}
\newcommand{\reorder}[1]{\mathtt{reordering}_{#1}}
\newcommand{\xorreorder}[1]{\mathtt{xor}\text{-}\mathtt{reordering}_{#1}}
\newcommand{\ceil}[1]{\lceil{#1}\rceil}
\newcommand{\floor}[1]{\left\lfloor{#1}\right\rfloor}
\newcommand{\Z}{\mathbb{Z}}
\newcommand{\N}{\mathbb{N}}
\newcommand{\BPOBDD}{\mathbf{BPOBDD}}
\newcommand{\BQOBDD}{\mathbf{BQOBDD}}
\newcommand{\EQ}{\mathrm{EQ}}
\newcommand{\SEQ}{\mathrm{SEQ}}
\newcommand{\REQ}{\mathrm{REQ}}
\newcommand{\MOD}{\mathrm{MOD}}
\newcommand{\MWS}{\mathrm{MWS}}
\newcommand{\WS}{\mathrm{WS}}
\newcommand{\MSW}{\mathrm{MSW}}
\newcommand{\PERM}{\mathrm{PERM}}
\newcommand{\PJ}{\mathrm{PJ}}
\newcommand{\RPJ}{\mathrm{RPJ}}
\begin{document}

    \markboth{K. Khadiev, A. Khadieva, A. Knop}
    {Exponential Separation between Quantum and Classical OBDDs, Reordering Method and Hierarchies}

    %%%%%%%%%%%%%%%%%%%%% Publisher's Area please ignore %%%%%%%%%%%%%%%
    %
%    \catchline{}{}{}{}{}
    %
    %%%%%%%%%%%%%%%%%%%%%%%%%%%%%%%%%%%%%%%%%%%%%%%%%%%%%%%%%%%%%%%%%%%%

    \title{Exponential Separation between Quantum and Classical Ordered Binary
            Decision Diagrams, Reordering Method and Hierarchies}

    \author{Kamil~Khadiev\\
        Institute of Computational Mathematics and IT,
        Kazan Federal University \\Kremlevskaya str, 35, Kazan, 420008, Russia\\
        \texttt{kamilhadi@gmail.com}\\
  %  }
%
%    \author{
    Aliya~Khadieva\\
        Institute of Computational Mathematics and IT,
        Kazan Federal University \\Kremlevskaya str, 35, Kazan, 420008, Russia\\
        Faculty of Computing, University of Latvia,\\
        Raina bulvaris 19, Riga, LV-1586, Latvia\\
        \texttt{aliya.khadi@gmail.com}
%    }
%
 %   \author{
 \\Alexander~Knop\\
        Department of Mathematics, University of California, San Diego\\
        9500 Gilman Dr, La Jolla, CA 92093-0112, USA\\
        \texttt{aaknop@gmail.com}
    }

    \maketitle

%    \begin{history}
 %       \received{(Day Month Year)}
        %\revised{(Day Month Year)}
 %       \accepted{(Day Month Year)}
 %       \comby{(xxxxxxxxxx)}
 %   \end{history}

    \begin{abstract}
        In this paper, we study quantum Ordered Binary Decision Diagrams($\OBDD$) model; it is a restricted version of read-once quantum branching programs, with respect to ``width'' complexity. It
is known that the maximal gap between deterministic and quantum complexities is
exponential. But there are few examples of functions with such a gap. We present a new technique (``reordering'') for proving lower bounds and upper bounds for OBDD with an arbitrary order of input variables if we have similar bounds for the natural order. Using this transformation, we construct a total function $\REQ$
such that the deterministic $\OBDD$ complexity of it is at least $2^{\Omega(n /
\log n)}$, and the quantum $\OBDD$ complexity of it is at most $O(n^2/\log n)$. It is the biggest
known gap for explicit functions not representable by $\OBDD$s of a linear
width. Another function(shifted equality function) allows us to obtain a gap $2^{\Omega(n)}$ vs $O(n^2)$.

Moreover, we prove the bounded error quantum and probabilistic $\OBDD$ width hierarchies for
complexity classes of Boolean functions. Additionally, using  ``reordering'' method we extend a hierarchy for read-$k$-times Ordered Binary Decision Diagrams ($\OBDD[k]$) of polynomial width, for $k = o(n /
\log^3 n)$. We prove a similar hierarchy for bounded error probabilistic
$\OBDD[k]$s of polynomial, superpolynomial and subexponential width.

The extended abstract of this work was presented on International
            Computer Science Symposium in Russia, CSR 2017, Kazan, Russia, June
            8 -- 12, 2017 \cite{kk2017}
    \end{abstract}

    \textit{quantum computing, quantum OBDD, OBDD, Branching programs,
    quantum vs classical, quantum models, hierarchy, computational complexity,
    probabilistic OBDD}

    \section{Introduction}
    Branching programs are a well-known computation model for discrete functions. 
This model has been shown useful in a variety of domains, such as hardware verification, model checking, and other CAD applications~\cite{Weg00}. 

One of the most important types of branching programs is oblivious read once branching programs, also known as Ordered Binary Decision Diagrams, or $\OBDD$~\cite{Weg00}. This model is suitable for studying of data streaming algorithms that are actively used in industry.

One of the most useful measures of complexity of $\OBDD$s is ``width''. This measure is an analog of a number of states for finite automaton and $\OBDD$s can be seen as nonuniform finite automata (see for example~\cite{AG05}). As for many other computation models, it is possible to consider quantum $\OBDD$s, and during the last decade they have been studied vividly~\cite{aazksw2019part1,AGK01,nhk00,ss2005,s06,aakk2018,ikpy2017,ikpy2021,gy2017,gy2015,gy2018,kkm2017,kkkrym2017}.

In 2005 Ablayev, Gainutdinova, Karpinski, Moore, and Pollett~\cite{AGKMP05} have proven that for any total Boolean function $f$ the gap between the width of the minimal quantum $\OBDD$ representing $f$ and the width of the minimal deterministic $\OBDD$ representing $f$ is at most exponential. However, this is not true for partial functions~\cite{AGKY14,g15,agky16}. They have also shown that this bound could be reached for $\MOD_{p, n}$ function, that takes the value $1$ on an input iff number of $1$s modulo $p$ in this input is equal to $0$. Authors have presented a quantum $\OBDD$ of width $O(\log p)$ for $\MOD_{p, n}$ (another quantum $\OBDD$ of the same width has been presented in~\cite{AV08}). Additionally, they have proven that any deterministic $\OBDD$ representing $\MOD_{p, n}$ has the width at least $p$. However, a lower bound for a width of a deterministic $\OBDD$ that represents $\MOD_{p, n}$ is tight, and it was unknown if it is possible to construct a function with an exponential gap but an exponential lower bound for the size of a deterministic $\OBDD$ representing this function. It was shown
that Boolean function $\PERM_n$ did not have a deterministic $\OBDD$ representation of width less than $2^{\sqrt{n} / 2}/(\sqrt{n} / 2)^{3 / 2}$~\cite{kmw91}. In 2005 Sauerhoff and
Sieling~\cite{ss2005} presented a quantum $\OBDD$ of width $O(n^2\log n)$ representing $\PERM_n$ and three years later Ablayev, Khasianov, and Vasiliev~\cite{akv2008} improved this lower bound and presented a quantum $\OBDD$ for this function of width $O(n \log n)$. But as in the previous case, this separation does not give us a truly exponential lower bound for deterministic $\OBDD$s.

Nevertheless, if we fix an order of variables in the $\OBDD$, it is possible to prove the desired statement. For example, it is known that \textit{equality function}, or $\EQ$, does not have an $\OBDD$ representation of the size less than $2^n$ for some order and it has a quantum $\OBDD$ of width $O(n)$ for any order~\cite{akv2008}. Unfortunately, for some orders, the equality function has a small deterministic $\OBDD$s.

Proving lower bounds for different orders is one of the main difficulties of proving lower bounds on width of $\OBDD$s.  In the paper, we present a new technique that allows us to prove such lower bounds. Using the technique, we construct a Boolean function $g$ from a Boolean function $f$ such that if any deterministic $\OBDD$ representing $f$ with the natural order over the variables has width at least $d(n)$, then any deterministic $\OBDD$ representing $g$ has width at least $d(O(n/\log n))$ \textbf{for any order of input variables} and if there is a quantum $\OBDD$ of width $w(n)$ for $f$, then there is a quantum $\OBDD$ of width $O(w(\frac{n}{\log n}) \cdot \frac{n}{\log n})$ for the function $g$.
It means that if we have a function with some gap between quantum $\OBDD$ complexity and deterministic $\OBDD$ complexity for some order, then we can transform this function into a function with almost the same gap but for all the orders. We call this transformation ``reordering''. The idea which is used in
the construction of the transformation is similar to the idea of a
transformation from~\cite{k15,DBLP:journals/jsyml/Krajicek08}.

We prove five groups of results using the transformation. At first, we consider the result of the transformation applied to the equality function (we call the new function \textit{reordered equality} or $\REQ_q$). We prove that $\REQ_q$ does not have a deterministic $\OBDD$ representation of width less than $2^{\Omega(\frac{n}{\log n})}$ and there is a bounded error quantum $\OBDD$ of width $O(\frac{n^2}{\log^2 n})$, where $n$ is a length of an input. As a result, we get a more significant gap between a width of quantum $\OBDD$s and width of deterministic $\OBDD$s than this gap for the $\PERM_n$ function. We prove such a gap for all the orders in contrast with a gap for $\EQ_n$, and we prove a better lower bound for deterministic $\OBDD$s than the lower bound for the $\MOD_{p, n}$ function.

Additionally, we considered \textit{shifted equality function} ($\SEQ_n$). We prove that $\SEQ_n$ does not have a deterministic $\OBDD$ representation of the width less than $2^{\Omega(n)}$ and there is a bounded error quantum $\OBDD$ with width $O(n^2)$. Note that the lower bound for the width of the minimal $\OBDD$ representing $\SEQ_n$ is better than for $\REQ_q$ but the upper
bound for the width of the minimal quantum $\OBDD$ representation is much better.

Using properties of $\MOD_{p, n}$, $\REQ_q$, and \textit{mixed weighted sum function} ($\MWS$) introduced by~\cite{s2005}, we prove a width hierarchy for classes of Boolean functions computed by bounded error quantum $\OBDD$s. We prove three hierarchy theorems: 
\begin{enumerate}
    \item the first of them and the tightest works for width up to $\log n$;
    \item the second of them is slightly worse than the previous one, but it works for width up to $n$;
    \item and finally the third one with the widest gap works for width up to $2^{O(n)}$.
\end{enumerate}
Similar hierarchy theorems are already known for deterministic $\OBDD$s \cite{AGKY14,AK13},  nondeterministic $\OBDD$s \cite{agky16}, and  $\OBDD[k]$s \cite{k15,ki2017,aakk2018}. Additionally, we present similar hierarchy theorems for bounded error probabilistic $\OBDD$s in the paper.

Hierarchies for quantum-model complexity classes and gaps for deterministic and quantum complexities were shown by researchers for automata models \cite{sy2014,kiy2018,qy2009} and other streaming (automata-like) models \cite{l2006,l2009,kkm2018,kk2019,kk2019disj,kk2022,kkzmkry2022}.

The fourth group of results is an extension of hierarchies by a number of tests for deterministic and bounded error probabilistic $\OBDD[k]$s of polynomial size. There are two known results of this type:
\begin{itemize}
    \item The first is a hierarchy theorem for $\OBDD[k]$s that was proven by Bollig, Sauerhoff, Sieling, and Wegener \cite{bssw96}. They have shown that $\P{\OBDD[(k - 1)]} \subsetneq         \P{\OBDD[k]}$ for $k = o(\sqrt{n} \log^{3 / 2} n)$;
    \item The second one was proven in \cite{Kha16} it states that         $\P{\OBDD[k]} \subsetneq \P{\OBDD[(k \cdot r)]}$ for $k = o(n / \log^2 n)$ and $r = \omega(\log n)$.
\end{itemize}
We partially improve both of these results, proving that $\P{\OBDD[k]} \subsetneq \P{\OBDD[2k]}$ for $k = o(n / \log^3 n)$. Our result improves the first one because it holds for bigger $k$, and the second one, because of a smaller gap between classes. The proof of our hierarchy theorem is based on properties of the Boolean function called \textit{reordered pointer jumping}, which is ``reordering'' of \textit{pointer jumping} function defined in~\cite{NW93,bssw96}.

Additionally, we partially improve a similar result of Hromkovich and Sauerhoff~\cite{hs2003} for a more general model, for probabilistic oblivious $\BP[k]$. They have proven such a hierarchy for $k \leq \log \frac{n}{3}$. We show similar hierarchy for polynomial size bounded error probabilistic $\OBDD[k]$s with error at most $1 / 3$ for $k = o(n^{1/3} / \log n)$. 

\subsection*{Structure of the paper} 
Section \ref{sec:prlmrs} contains descriptions of models, classes, and other necessary definitions. Discussion about the reordering method and applications for quantum $\OBDD$s is located in Section~\ref{sec:deord}. Section~\ref{sec:com-obdd} contains an analysis of properties of a function that guarantee the existence of a small commutative $\OBDD$ representation of this function. In Section~\ref{sec:gap} we explore the gap between quantum and
deterministic $\OBDD$ complexities. The width hierarchies for quantum and probabilistic $\OBDD$s are proved in Section~\ref{sec:hierarchies}. Finally, Section~\ref{sec:kobdd-hrch} contains applications of the reordering method and hierarchy results for deterministic and probabilistic $\OBDD[k]$s. Section \ref{sec:conc} concludes the paper.

    \section{Preliminaries}\label{sec:prlmrs}
    Ordered binary decision diagrams, or $\OBDD$s, is a well-known way to represent Boolean functions. This model is a restricted version of Branching Program \cite{Weg00}. A branching program over a set $X = \set{x_1, \dots, x_n}$ of $n$ Boolean variables is a directed acyclic graph $P$ with one source node $s$. Each inner node $v$ of $P$ is labeled by a variable $x_i \in X$, each edge of
$P$ is labeled by a Boolean value. For each node $v$ labeled by a variable $x_i$, $v$ has outgoing edges labeled by $0$ or $1$, and each sink of this graph is labeled by a Boolean value. A branching program $P$ called \textit{deterministic} iff for each inner node there are exactly two outgoing edges labeled by $0$ and $1$, respectively.

We say that a branching program $P$ accepts $\sigma \in \bool^n$ iff there a exists a path, called \textit{accepting} path, from the source to a sink labeled by $1$, such that in the all nodes labeled by a variable $x_i$ this path goes along an edge labeled by $\sigma(i)$. A branching program $P$ represents a Boolean function $f : \bool^n \to \bool$ if for each $\sigma \in \bool^n$ $f(\sigma) = 1$ holds iff $P$ accepts $\sigma$. The \textit{size} of a branching program $P$ is a number of nodes in the graph.

A branching program is \textit{leveled} if the nodes can be partitioned into levels $V_1$, \dots, $V_\ell$, and $V_{\ell + 1}$ such that all the sinks belong to $V_{\ell + 1}$, $V_1 = \set{s}$, and nodes in each level $V_j$ with $j \le \ell$ have outgoing edges only to nodes in the next level $V_{j + 1}$. The \textit{width} $w(P)$ of a leveled branching program $P$ is the maximum of the number of nodes in levels of $P$, i.e. $w(P) = \max\limits_{1 \le j \le \ell + 1} |V_j|$. A leveled branching program is called \textit{oblivious} if all the inner nodes of each level are labeled by the same variable.

A branching program is called a \textit{read}-$k$ branching program if each variable is tested on each path only $k$ times. A deterministic oblivious leveled read once branching program is also called the ordered binary decision diagram. Note that $\OBDD$ reads variables on all the paths in the same order $\pi$.  For a fixed order $\pi$ we call an $\OBDD$ that reads in this order a $\OBDD[1][\pi]$. Let us also denote the natural order over the variables $\set{x_1, \dots, x_n}$ as $\id = (1, \dots, n)$. A branching program is called $\OBDD[k]$ if it is a read-$k$ oblivious branching program that consists of $k$ layers, such that each layer is a $\OBDD[1][\pi]$, possibly with many sources, for some order $\pi$. If we want to emphasize that a $\OBDD[k]$ has order $\pi$, then we write $\OBDD[k][\pi]$

Let $\mathrm{tr}_P : \set{1, \dots, n} \times \set{1, \dots, w(P)} \times \bool \to \set{1, \dots, w(P)}$ be a transition function of an $\OBDD$ $P$, where $n$ is length of input and $w(P)$ is width of $P$. We assume that all levels of $P$ contain $w(P)$ nodes. If some level contains less than $w(P)$ nodes, then we add additional dummy nodes to the level.  An $\OBDD$ $P$ is called \textit{commutative} iff for any order $\pi'$ we can construct an $\OBDD$ $P'$ by only reordering of the transition function and $P'$ still computes the same function. More formally, we call a $\OBDD[\pi]$ commutative iff for any order $\pi'$ a $\OBDD[\pi']$ $P'$, defined by a transition function $\mathrm{tr}_{P'}(i, a, b) = tr_{P}(\pi^{-1}(\pi'(i)), a, b)$, represents the same function as $P$. Additionally, we call a~$\OBDD[k]$ commutative if each layer is a commutative $\OBDD$.

Nondeterministic $\OBDD$ or $\NOBDD$ is a $\OBDD$ such that nodes can have more than one outgoing edges with the same label. A nondeterminstic $\OBDD$ accepts an input if there is at least one path from the source node to a sink node labeled by $1$. Now let us define probabilistic $\OBDD$ or $\POBDD$. $\POBDD$ over a set $X = \set{x_1, \dots, x_n}$ is a nondetermenistic $\OBDD$ with a special mode of acceptance. We say that $\POBDD$ is a bounded error representation of a Boolean function $f : \bool^n \to  \bool$ iff for
every $\sigma \in \bool^n$ the following recursion procedure returns $f(\sigma)$ with probability at least $\frac{1}{2}+\varepsilon$, for some $\varepsilon>0$:
\begin{enumerate}
    \item Initially it starts from the source of the $\POBDD$;
    \item If the current node is a sink, then it returns the value of its label;
    \item Let the current node be labeled by $x_i$ if there are no outgoing edges with label $\sigma(i)$ from the current node, then it returns $0$;
    \item Otherwise, it chooses randomly an edge labeled by $\sigma(i)$ from the current node to a node $u$, consider $u$ as the current node, and it goes to the step~$2$.
\end{enumerate}

I. Wegener's book \cite{Weg00} contains detailed definitions and more information on nondeterministic and probabilistic OBDDs.

Let us define quantum $\OBDD$s or $\QOBDD$s \cite{AGKMP05,AGK01}, Figure \ref{fig:qobdd}. You can find more detailed information about quantum computing in \cite{sy2014,ay2015}. For a given $n > 0 $, a $\QOBDD$ $P$ of a width $w$, is a $4$-tuple $P = (T, q_0 , \mathrm{Accept}, \pi)$, where 
\begin{itemize}
    \item $ T = \set{(G_i^0, G_i^1)}_{i = 1}^n$ is a sequence of pairs of (left) unitary matrices representing the transitions applying on the $i$-th step, where a choice of $G_i^0 $ or $G_i^1 $ is determined by the corresponding input bit;
    \item $\mathrm{Accept} \subseteq \set{1, \dots, w} $ is a set of accepting states;
    \item $\pi$ is a permutation of $ \set{1, \dots, n} $ defining the order over the input variables.
\end{itemize}

For any given input $ \sigma \in  \bool^n $, the computation of $P$ on $\sigma$ can be traced by a vector from $w$-dimensional Hilbert space over the field of complex numbers. The initial one is $\ket{\psi}_0 = \ket{q_0}$. On each step $j$, we test the input bit $x_{\pi(j)}$ and then the corresponding unitary operator is applied: $\ket{\psi}_j = G_j^{x_{\pi(j)}}\cdot (\ket{\psi}_{j - 1})$, where $\ket{\psi}_{j - 1}$ and $\ket{\psi}_j$ represent the states of the system after the $(j - 1)$-th and $j$-th steps, respectively. At the end of the computation, the program $P$ measures qubits. The accepting probability of $P$ on an input  $\sigma$ is $\sum\limits_{i \in \mathrm{Accept}} v^2_i$, where $(v_1, \dots, v_w) =  \ket{\psi}_n$. We say that a function $f : \bool^n \to \bool$ has a bounded error $\QOBDD$ representation iff for any $\sigma \in \bool^n$ and some $\varepsilon>0$ holds  
\begin{itemize}
    \item if $f(\sigma) = 1$, then the accepting probability of $P$ is at least $\frac{1}{2}+\varepsilon$ and 
    \item if $f(\sigma) = 0$, then the accepting probability of $P$ is at most $\frac{1}{2}-\varepsilon$.
\end{itemize}

\begin{figure}
\begin{center}
\includegraphics[width=10cm]{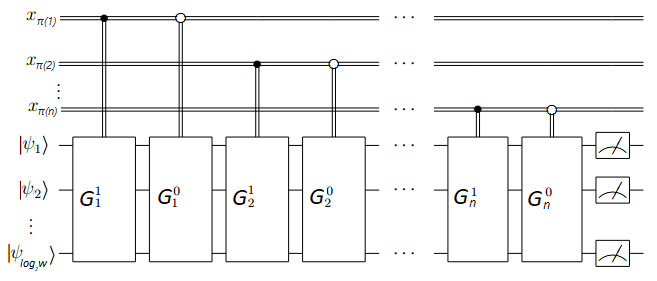}
\end{center}
\caption{Quantum OBDD.}
\label{fig:qobdd}
\end{figure}

Similarly to commutative deterministic $\OBDD$s we may define commutative $\QOBDD$s. $\QOBDD$ $P$ is called \textit{commutative} iff for any permutation $\pi'$ we can construct equivalent $\QOBDD$ $P'$ by only reordering matrices
$G$. Formally, it means that for any order $\pi'$, $P' = (T', q_0 , \mathrm{Accept}, \pi)$ is a bounded error representation of
the same function as $P$ where $T' = \set{\left(G_{\pi^{-1}(\pi'(i))}^0, G_{\pi^{-1}(\pi'(i))}^1\right)}^n_{i=1}$. We call a $\QOBDD[k]$ commutative if each layer of this program is a commutative $\QOBDD$. See Figure \ref{fig:comm-qobdd}.

\begin{figure}
\begin{center}
\includegraphics[width=13cm]{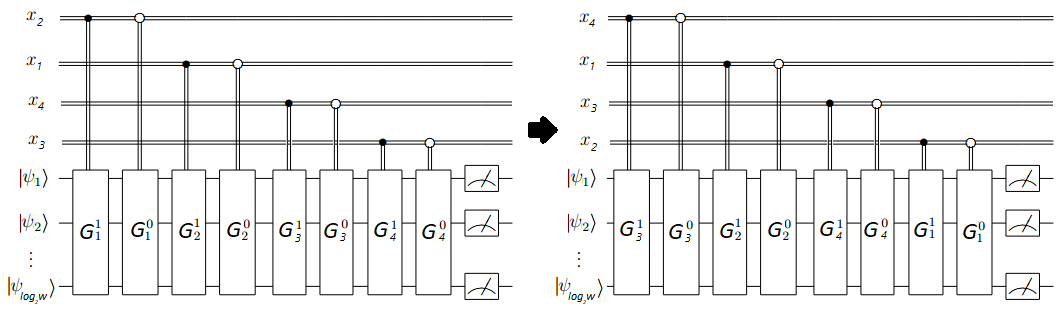}
\end{center}
\caption{Commutative QOBDD. The first QOBDD has order $(2,1,4,3)$. The second QOBDD has order $(4,1,3,2)$ and represents the same Boolean function.}
\label{fig:comm-qobdd}
\end{figure}

\subsection{Quantum Fingerprinting}\label{sec:def-fingerprint}

Let us present some basic concepts of quantum fingerprinting technique from \cite{af98,an2008,an2009,akv2008}. This technique is used in proofs from Sections \ref{sec:com-obdd} and ref{sec:gap}.

For the problem being solved we choose some cardinal $m$, an error probability bound $\varepsilon > 0$, fix $t = \lceil(2/\varepsilon) \ln 2m\rceil$, and construct a mapping $g : \{0, 1\}^n\to \mathbb{Z}$. Then for arbitrary binary string $\sigma = (\sigma_1 \dots \sigma_n)$ we create it's fingerprint $|h_\sigma\rangle$ composing $t$ single qubit fingerprints $|h_\sigma^i\rangle$:
\[|h_\sigma^i\rangle=cos\frac{2\pi k_i g(\sigma)}{m}|0\rangle + sin\frac{2\pi k_i g(\sigma)}{m}|0\rangle, \quad\quad\quad 
|h_\sigma\rangle=\frac{1}{\sqrt{t}}\sum_{i=1}^{t}|i\rangle|h^i_{\sigma}\rangle\]

Here the last qubit is rotated by $t$ different angles about the $\hat{y}$ axis of the Bloch sphere. The chosen parameters $k_i \in\{1\dots,m-1\}$, for $i\in\{1\dots t\}$ are ``good'' in the following sense. A set of parameters $K = \{k_1,\dots, k_t\}$ is called ``good'' for $g\neq 0 \mod m$ if
\[\frac{1}{t^2}\left(\sum_{i=1}^t cos\frac{2\pi k_i g}{m}\right)^2<\varepsilon\]
The left side of inequality is the squared amplitude of the basis state $|0\rangle^{\otimes \log_2 t} |0\rangle$ if the operator
$H^{\otimes \log_2 t}\otimes I $ has been applied to the fingerprint $|h_\sigma\rangle$. Informally, that kind of set guarantees, that
the probability of error will be bounded by a constant below $1$.

The following lemma from \cite{akv2008,an2008,an2009} proves the existence of a ``good'' set .
\begin{lemma}[\cite{akv2008}]
There is a set $K$ with $|K| = t = \lceil(2/\varepsilon) \ln 2m\rceil$  which is ``good'' for all $g\neq 0 \mod m$.
\end{lemma}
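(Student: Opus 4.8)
The plan is to prove the existence of a ``good'' set $K$ via the probabilistic method, showing that a uniformly random choice of $k_1, \dots, k_t$ succeeds for \emph{all} nonzero residues $g$ simultaneously with positive probability. First I would fix an arbitrary $g \not\equiv 0 \pmod m$ and analyze the random variable $S_g = \frac{1}{t}\sum_{i=1}^t \cos\frac{2\pi k_i g}{m}$, where each $k_i$ is drawn independently and uniformly from $\{1, \dots, m-1\}$. The key observation is that the expectation $\mathbb{E}\left[\cos\frac{2\pi k_i g}{m}\right]$ is essentially the average of $\cos$ over the nonzero multiples of $\frac{2\pi g}{m}$, which is a geometric-type sum that is close to $0$ (the full average over all $m$ residues is exactly $0$ since $g \not\equiv 0$, and dropping the single term $k=0$ perturbs this by only $O(1/m)$). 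Hence each summand is a bounded, mean-near-zero random variable.

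Next I would apply a concentration inequality (Hoeffding's inequality suffices, since each $\cos$ term lies in $[-1, 1]$) to bound $\Pr\left[|S_g| \geq \sqrt{\varepsilon}\right]$. Because the summands are independent and bounded, Hoeffding gives a bound of the form $2\exp(-c \varepsilon t)$ for the probability that $|S_g|$ exceeds the threshold, where the threshold $\sqrt{\varepsilon}$ is exactly what makes $S_g^2 < \varepsilon$, i.e.\ the ``good'' condition. The choice $t = \lceil (2/\varepsilon)\ln 2m \rceil$ is calibrated so that this failure probability for a single $g$ is at most roughly $\frac{1}{2m}$.

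Then I would take a union bound over all $m-1$ nonzero residues $g \in \{1, \dots, m-1\}$. Since each fails with probability less than $\frac{1}{2m}$ and there are fewer than $m$ values of $g$, the total probability that \emph{some} $g$ is bad is strictly less than $1$. Therefore, with positive probability a single random draw of $K = \{k_1, \dots, k_t\}$ is simultaneously good for every nonzero $g$, which establishes that such a set exists.

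The main obstacle I anticipate is the expectation estimate: one must verify carefully that $\mathbb{E}\left[\cos\frac{2\pi k_i g}{m}\right]$ is close enough to $0$ (accounting for the excluded value $k=0$) that it does not swamp the concentration term, and one must track the constants in the exponent so that the union bound genuinely closes with the stated value of $t$. A secondary subtlety is that the relevant quantity is $S_g^2$ rather than $S_g$ itself, so the two-sided tail bound $|S_g| \geq \sqrt{\varepsilon}$ must be matched precisely to the squared-amplitude formulation of ``good.'' These are the points where the precise value of $t$ enters and where the bookkeeping has to be done cleanly.
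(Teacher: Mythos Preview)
The paper does not supply its own proof of this lemma; it is quoted verbatim from \cite{akv2008} (see also \cite{an2008,an2009}), so there is no in-paper argument to compare against. Your probabilistic-method outline---sample $k_1,\dots,k_t$ uniformly, use Hoeffding on the bounded i.i.d.\ terms $\cos(2\pi k_i g/m)$ to bound $\Pr[|S_g|\ge\sqrt{\varepsilon}]$, then union-bound over the $m-1$ nonzero residues---is exactly the argument used in those cited sources, and the value $t=\lceil(2/\varepsilon)\ln 2m\rceil$ is precisely what makes $2\exp(-t\varepsilon/2)<1/m$ so that the union bound closes. One cosmetic simplification: in the original proofs the $k_i$ are drawn from $\{0,\dots,m-1\}$ rather than $\{1,\dots,m-1\}$, which makes the expectation of each cosine exactly zero and removes the $O(1/m)$ mean-shift you flagged as an obstacle.
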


We use this result for fingerprinting technique \cite{akv2008} choosing the set $K = \{k_1,\dots, k_t\}$ that is ``good'' for all $g = g(\sigma)\neq 0$. It allows to distinguish those inputs whose image is $0$ modulo $m$ from the others.

That hints on how this technique may be applied:
\begin{enumerate}
\item We construct $g(x)$, that maps all acceptable inputs to $0$ modulo $m$ and others to arbitrary non-zero (modulo $m$) integers.

\item After the necessary manipulations with the fingerprint the $H^{\otimes \log_2 t}$ operator is applied to the first $\log_2 t$ qubits. This operation ``collects'' all of the cosine amplitudes at the all-zero state. That is, we obtain the state of type
\[|h'_\sigma\rangle=\frac{1}{t}\sum_{i=1}^{t}cos\left(\frac{2\pi k_i g(\sigma)}{m}\right) |00\dots 0\rangle|0\rangle + \sum_{i=2}^{2t}\alpha_i|i\rangle\]
\item This state is measured in the standard computational basis. Then we accept the input if the outcome is the all-zero state. This happens with probability
\[Pr_{accept}(\sigma)=\frac{1}{t^2} \left(\sum_{i=1}^{t}cos\frac{2\pi k_i g(\sigma)}{m}\right)^2,\]
which is $1$ for inputs, whose image is $0 \mod m$, and is bounded by $\varepsilon$ for the others.
\end{enumerate} 

   \section{Reordering Method}\label{sec:deord}
   As it was mentioned before, one of the biggest issues in proving lower bounds on 
the $\OBDD$ complexity of 
a function is proving these lower bounds for different orders. In this section, we suggest a method that 
is called ``reordering''. The method allows us to construct a transformation of a Boolean function 
$f : \bool^q \to \bool$ into a  partial function $\reorder{f} : \bool^n \to \bool$, such that 

\begin{itemize}
    \item $n = q \ceil{1 + \log q}$ \footnote{We use $\log$ to denote logarithms base
        $2$.};
    \item If any $\OBDD[\pi]$ representation of $f$ has width at least
        $d(q)$, then any $\OBDD$ representation of $\reorder{f}$ has width at
        least $d(q)$;
    \item If there is a bounded error commutative $\QOBDD$ representation of $f$
        of width $w(q)$, then there is a bounded error $\QOBDD$ representation
        of $\reorder{f}$ of width $w(q) \cdot q$.
\end{itemize}

We construct a function $\reorder{f}:\{0,1\}^{q\lceil \log q + 1\rceil}\to\{0,1\}$ by the following way:
\[
    \reorder{f}(z_{1, 1}, \dots, z_{1, l}, \dots, z_{q, 1}, \dots, z_{q, l},
        y_1, \dots, y_q) = f(y_{\theta^{-1}(1)},\dots,y_{\theta^{-1}(q)}),\]
where $l = \ceil{\log q}$, $\theta(i)=\bin{z_{i, 1}, \dots, z_{i, l}} + 1$ and $\bin{a_1, \dots, a_l}$ is a natural number with
binary representation $a_1 \dots a_l$. The function $\reorder{f}$ is defined on an input $(z_{1, 1}, \dots, z_{1, l}, \dots, z_{q, 1}, \dots, z_{q, l}, y_1, \dots, y_q)$ iff $\set{\theta(1), \dots, 
      \theta(l)} = \set{1, \dots, q}$. This condition means that $\theta^{-1}$ exists.

Similarly, we define 
\[
    \xorreorder{f}(z_{1, 1}, \dots, z_{1, l}, \dots, z_{q, 1}, \dots, z_{q, l},
                y_1, \dots, y_q) = f(y_{\theta^{-1}(1)},\dots,y_{\theta^{-1}(q)}),\]

where $\theta(i)=\bin{\bigoplus\limits_{j = 1}^i z_{j, 1}, 
                              \dots,
                              \bigoplus\limits_{j = 1}^i z_{j, l}} + 1$, for $i\in\{1,\dots,q\}$.

\begin{theorem} \label{th:d-obdd}
 Let $k$ be an integer, $\theta$ be a permutation of $\set{1, \dots, q}$.
 
  If $f : \bool^q \to \bool$ is a Boolean function such that any $\OBDD[k][\theta]$ representation of $f(x_1, \dots, x_q)$  has width at least $d$, then any $\OBDD[k]$ representation of $\reorder{f}(z_{1, 1}, \dots, z_{q, l}, y_1, \dots, y_q)$ ($\xorreorder{f}(z_{1, 1}, \dots, z_{q, l}, y_1, \dots, y_q)$) has width at least $d$.
\end{theorem}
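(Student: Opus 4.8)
The plan is to prove the contrapositive: from a small $\OBDD[k]$ for $\reorder{f}$, extract a small $\OBDD[k][\theta]$ for $f$. The key observation is that the ``address'' variables $z_{i,j}$ encode a permutation $\theta$, and by hardwiring a \emph{specific} choice of these $z$-variables we can force the resulting $\OBDD$ to read the $y$-variables in exactly the order $\theta$. So suppose $P$ is an $\OBDD[k]$ representing $\reorder{f}$ with width $w < d$, reading its variables in some order $\pi$. First I would fix the $z$-variables so that the induced permutation is precisely $\theta$, i.e. choose the bits so that $\theta(i) = \bin{z_{i,1},\dots,z_{i,l}} + 1$ matches the target permutation. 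This is where the definition pays off: once the $z$'s are fixed, $\reorder{f}$ restricted to the remaining $y$-variables equals $f(y_{\theta^{-1}(1)},\dots,y_{\theta^{-1}(q)})$, which (up to relabeling the variables through $\theta$) is just $f$.

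Next I would argue that restricting a branching program by fixing some variables to constants never increases its width, and in fact yields a legal $\OBDD[k]$ on the surviving variables. Concretely, in each layer of $P$ the nodes labeled by a fixed $z$-variable become ``pass-through'' nodes following the hardwired bit; contracting these edges leaves an oblivious read-$k$ program on the $y$-variables whose width is at most $w$. The order in which the surviving program reads the $y$-variables is the order induced by $\pi$ on the $y$-positions. Because $P$ represents $\reorder{f}$ correctly on \emph{all} inputs, in particular on the subcube where the $z$'s are our chosen constants, the restricted program computes $f(y_{\theta^{-1}(1)},\dots,y_{\theta^{-1}(q)})$ correctly, hence after renaming $y_{\theta^{-1}(i)} \mapsto x_i$ it computes $f(x_1,\dots,x_q)$.

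The remaining point, and the main obstacle, is matching the \emph{order}: the hypothesis only lower-bounds $\OBDD[k][\theta]$ width for the one order $\theta$, whereas the induced order on the $y$'s is dictated by $P$'s order $\pi$ and need not be $\theta$. I would resolve this by exploiting the freedom already built into the $z$-encoding. Since the $z$-variables are chosen by us \emph{after} we see $\pi$, I can select, among all bit-assignments to the $z_{i,j}$, one whose induced permutation $\theta(i) = \bin{z_{i,1},\dots,z_{i,l}}+1$ composes with the $y$-reading order of $\pi$ to yield exactly $\theta$; equivalently, I set up the address bits so that the variable read as the $i$-th $y$-symbol in $P$'s order becomes $x_{\theta(i)}$ under the renaming. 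For the $\xorreorder{f}$ version the same idea applies, except the encoding is through prefix-XORs $\bigoplus_{j=1}^i z_{j,\bullet}$, so I would instead solve for the $z$'s incrementally (each new block of $z$'s can shift the running XOR to any desired value), again realizing any target permutation. Because the restricted program is a width-$\le w$ $\OBDD[k][\theta]$ for $f$, the hypothesis forces $w \ge d$, contradicting $w < d$ and completing the argument.

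I expect the delicate step to be verifying that restriction preserves the read-$k$-oblivious \emph{layered} structure cleanly, since $P$ interleaves $z$- and $y$-variables in an arbitrary order, and that the surviving layers still partition into $k$ sub-$\OBDD$s each with a single order on the $y$'s. This is a routine but careful bookkeeping argument about contracting levels labeled by fixed variables, and I would state it as the core technical lemma before assembling the above.
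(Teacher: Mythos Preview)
Your proposal is correct and follows essentially the same route as the paper: assume a $\OBDD[k]$ $P$ of width $<d$ for the reordered function, look at the order $j_1,\dots,j_q$ in which $P$ reads the $y$-variables, hardwire the $z$-variables so that $y_{j_i}$ receives address $\theta(i)$ (for $\xorreorder{f}$, via the inverse of the prefix-XOR map exactly as you describe), and observe that the restriction is a $\OBDD[k][\theta]$ for $f$ of width at most that of $P$. The paper's proof is terser and does not spell out the ``restriction preserves the layered read-$k$ structure'' bookkeeping you flag as the delicate step, but the arguments are identical.
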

\begin{proof}
    Proofs for $\reorder{f}$ and $\xorreorder{f}$ are almost the same. Here we present only the proof for $\reorder{f}$.

    Let us assume that there is a $\OBDD[k][\pi]$ $P$ for $\reorder{f}$ of width $d' < d$, where $\pi$ is permutation of $\{1,\dots,n\}$, $n=q(l+1)$. Let we meet $y_1,\dots,y_q$ variables in order $j_1,\dots,j_q$ when we consider variables $z_{1,1},\dots,z_{q,l},y_1,\dots,y_q$ in order $\pi$.
    
   Let us fix any order $\theta$ that is permutation of $\{1,\dots,q\}$. Then we consider only inputs $\Sigma\subset\{0,1\}^n$. Any input $\sigma\in\Sigma$ is such that $y_{j_1}, \dots ,y_{j_q}$ has addresses $\theta(1), \dots, \theta(q)$. Formally it means that  $\bin{z_{j_i, 1}, \dots, z_{j_i, l}} = \theta(i)$, for $i\in\{1,\dots,q\}$. 
    
    It is easy to see that if we consider $P'$ equal to $P$ such that we consider only inputs from $\Sigma$ and with all the variables $y_{j_i}$ replaced by $x_{\theta(i)}$. $P'$ is a $\OBDD[k][\theta]$ of width at most $d' < d$ that computes $f(x_1,\dots,x_q)$. This is a contradiction with the fact that any $\OBDD[k]$ that represents $f(x_1, \dots, x_q)$ has width at least $d$.
\end{proof}

\begin{theorem}\label{th:deord-classic}
    Let $f : \bool^q \to \bool$ be a Boolean function and $k$ be a positive integer. If there is a commutative $\OBDD[k]$ (bounded error commutative $\POBDD[k]$ or commutative $\NOBDD[k]$) for $f$ of width $d$, then there are $\OBDD[k]$ (bounded error $\POBDD[k]$ or $\NOBDD[k]$) representations of $\xorreorder{f}$ and $\reorder{f}$ of width $d \cdot q$.
\end{theorem}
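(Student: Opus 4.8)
The plan is to build, from the commutative program $P$ for $f$, a program $P'$ that reads the variables of $\reorder{f}$ in the interleaved order
\[
    z_{1, 1}, \dots, z_{1, l}, y_1,\ z_{2, 1}, \dots, z_{2, l}, y_2,\ \dots,\ z_{q, 1}, \dots, z_{q, l}, y_q,
\]
repeated once per layer, and simulates $P$ on the fly. The state of $P'$ is a pair: the current state of the simulation of $P$ (at most $d$ values) together with a small register that decodes the address block currently being read (at most $q$ values). While reading $z_{i, 1}, \dots, z_{i, l}$ the register accumulates $\theta(i) = \bin{z_{i, 1}, \dots, z_{i, l}} + 1$; for $\reorder{f}$ the register is reset after each block, whereas for $\xorreorder{f}$ it keeps the running bitwise XOR $\bigoplus_{j \le i} z_{j, \cdot}$ and is never reset. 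Then, on reading the bit $y_i$, $P'$ applies to the simulation component exactly the transition that $P$ uses for the variable $x_{\theta(i)}$ with value $y_i$, and (for $\reorder{f}$) clears the register. If an address ever falls outside $\set{1, \dots, q}$ the input is undefined for $\reorder{f}$, so $P'$ may route such partial addresses to an arbitrary existing state without enlarging the width.

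Correctness follows from commutativity. On a defined input the decoded addresses $\theta(1), \dots, \theta(q)$ form a permutation of $\set{1, \dots, q}$, so across one pass $P'$ applies each of the $q$ transitions of $P$ exactly once, in the order $\theta(1), \dots, \theta(q)$. Because $P$ is commutative, this reordered application reaches the same simulation state as $P$ would on the assignment $x_{\theta(i)} = y_i$, i.e. $x_j = y_{\theta^{-1}(j)}$; hence $P'$ outputs $f(y_{\theta^{-1}(1)}, \dots, y_{\theta^{-1}(q)}) = \reorder{f}(\dots)$. For the read-$k$ case we repeat the interleaved order $k$ times, using layer $t$ of $P'$ to replay pass $t$ of $P$: since each layer of $P$ is commutative and the $z$-values (hence the addresses) are identical in every pass, the argument applies layer by layer, the simulation state being carried between layers exactly as in $P$. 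All $k$ layers of $P'$ use the one interleaved order, so $P'$ is a genuine $\OBDD[k]$.

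The width bound is read off from the state description: at most $d$ simulation states times at most $q$ register states gives width $d \cdot q$. The probabilistic and nondeterministic cases use the identical construction — only the simulation component changes, its transitions becoming stochastic or nondeterministic while the address register stays deterministic — and commutativity of the probabilistic (resp. nondeterministic) program guarantees that applying the transitions in the order $\theta(1), \dots, \theta(q)$ yields the same acceptance probability (resp. the same accepting/rejecting behaviour). Thus the $\tfrac12 + \varepsilon$ bounded-error guarantee of the $\POBDD[k]$ is preserved.

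I expect the main obstacle to be the width accounting rather than the correctness. One must interleave the address blocks with the data bits so that at most one address is ever held in memory — storing all $q$ addresses simultaneously would blow the width up to $q!$ — and one must check that the $\xorreorder{f}$ register, which cannot be reset, still ranges over only $O(q)$ values throughout a pass. The conceptual heart is simply that commutativity lets the externally supplied order $\theta$ replace $P$'s internal order without changing the computed value.
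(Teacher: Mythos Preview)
Your proposal is correct and follows essentially the same construction as the paper: interleave the address blocks with the data bits, keep a pair $(a,b)$ with $a$ the current (partial or running-XOR) address and $b$ the simulated state of $P$, and use commutativity to argue that applying the transitions in the order $\theta(1),\dots,\theta(q)$ yields the same result. The paper's proof differs only in presentation, giving the explicit transition rules for $P_1$ and $P_2$ rather than your more verbal description; your remark about routing out-of-range partial addresses to an existing state and your observation that the XOR register stays within $O(q)$ values are exactly the small bookkeeping points the paper glosses over.
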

\begin{proof} 
    Let $P$ be a commutative deterministic $\OBDD[k]$ of width $d$ representing a Boolean function $f$. We construct a deterministic $\OBDD[k]$s $P_1$ and $P_2$ of width $q \cdot d$ representing $\reorder{f}$ and $\xorreorder{f}$, respectively. $P_1$ and $P_2$ read variables in the following order: $z_{1, 1}, \dots, z_{1, l}, y_1,z_{2, 1}, \dots, z_{2, l}, y_2, \dots, z_{q, 1}, \dots, z_{q, l}, y_q$; both of them have $q \cdot d$ nodes on each level, each of them corresponds to a pair $(a, b)$, where $a \in \bool^l$ and $b \in \set{1,\dots, d}$, and both of them have $q$ stages. $a$ is an address of a following variable $y_i$, $b$ corresponds to some node of a level from the program $P$. Let us describe a computation on the stage $i$ on an input $\sigma\in\{0,1\}^{ql+q}$.
    \begin{description}
        \item[reordering:] At the beginning of the stage $P_1$ is in the node $(0, b)$ for some $b$. While reading $z_{i, 1}, \dots, z_{i, l}$ the program stores these variables in the first component of the current node. Formally, the node $(a',b)$ of a level $j$ of the stage has $1$-edge leading $(a'+2^j,b)$ node of the next level and $1$-edge leading $(a',b)$ node of the next level. After we have read all these bits, we reached  the node $(a, b)$. Then we apply a transition of $P$ for $(a+1)$-th variable to node of a level (state) $b$. Formally, if the transition function of $P$ is such that $b' = tr_P(\pi^{-1}(a + 1), b, y_i)$, then we go to the node $(0, b')$.
            
            Let $\sigma(z_{i', j'})$ be a bit of the input $\sigma$ that corresponds to a variable $z_{i',j'}$. In the case when all addresses $\bin{\sigma(z_{i, 1}), \dots, \sigma(z_{i, l})}$ are different numbers from $\set{1, \dots, q}$ the program $P_1$ just emulates the work of $P'$. Here $P'$ is $\OBDD[k][\pi]$ that is constructed from $P$ by permutation of the transition function of $P$ with respect to the order
            \[
                 \pi = \left(
                          \bin{\sigma(z_{1, 1}), 
                               \dots,
                               \sigma(z_{1, l})} + 1,
                      \right.\\
                          \dots,\\
                      \left.
                          \bin{\sigma(z_{q, 1}), 
                               \dots,
                               \sigma(z_{q, l})} + 1
                      \right).
            \]
            By the definition of the commutative $\OBDD[k]$, the program $P'$ computes the same function as $P$. Therefore, $P_1$ returns the same result. And by the definition of the function $\reorder{f}$, $P_1$ computes $\reorder{f}$.

         \item[xor-reordering:] 
$P_1$ is in the node $(0, b)$ for some $b$. While reading $z_{i, 1}, \dots, z_{i, l}$ the program stores xor of these variables with bits of the first component of the current node. Formally, the node $(a',b)$ of a level $j$ of the stage has two edges. The edge with label $z$ leading $(a'',b)$ node of the next level, for $a'=\bin{a'_1,\dots,a'_l},a''=\bin{a'_1,\dots,a'_{j-1},a'_{j}\oplus z, a'_{j+1},\dots,a'_{l}}$. After we have read all these bits, we reached  the node $(a, b)$. Then we apply a transition of $P$ for $(a+1)$-th variable to node of a level (state) $b$. Formally, if the transition function of $P$ is such that $b' = tr_P(\pi^{-1}(a + 1), b, y_i)$, then we go to the node $(a, b')$.

            In the case when all addresses $\bin{\bigoplus\limits_{i = 1}^1 \sigma(z_{i, 1}), 
                               \dots,
                               \bigoplus\limits_{i = 1}^1 \sigma(z_{i, l})} + 1$ are different numbers from $\set{1, \dots, q}$ the program $P_2$ just emulates the work of $P''$ which is constructed from $P$ by permutation of the transition function of $P$ with respect to the order
            \[
                 \pi = \left(
                          \bin{\bigoplus\limits_{i = 1}^1 \sigma(z_{i, 1}), 
                               \dots,
                               \bigoplus\limits_{i = 1}^1 \sigma(z_{i, l})} + 1,
                      \right.\]\[
                          \dots,\]\[
                      \left.
                          \bin{\bigoplus\limits_{i = 1}^q \sigma(z_{i, 1}), 
                               \dots,
                               \bigoplus\limits_{i = 1}^q \sigma(z_{i, l})} + 1
                      \right).
            \]
            By the definition of commutative $\OBDD[k]$ the program $P''$ computes the same function as $P$. Therefore, $P_2$ returns the same result. And by the definition of the functions $\xorreorder{f}$, $P_1$ computes $\xorreorder{f}$.
            
    \end{description}

    All other cases have the same proofs.
\end{proof}

\begin{corollary}\label{cr:total-reordering}
    Let $f : \bool^q \to \bool$ be a Boolean function, and let $k$ be a positive integer such that
    \begin{itemize}
        \item any $\OBDD[k]$ representation of $f$ has width at least $d$ and
        \item there is a commutative $\OBDD[k]$ ($\NOBDD[k]$) representation of $f$ of width $w$.
    \end{itemize}
    
    Then there is a total Boolean function $g : \bool^n \to \bool$ ($n = q (\ceil{\log q} + 1)$), such that
    \begin{itemize}
        \item $g$ is an extension of the partial function $\reorder{f}$,
         \item any $\OBDD[k]$ representation of $g$ has width at least $d$ and
        \item there is a $\OBDD[k]$ ($\NOBDD[k]$) representation of $g$ of the width $w \cdot q$.       
    \end{itemize}
\end{corollary}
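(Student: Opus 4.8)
The plan is to let $g$ be the \emph{total} Boolean function actually computed by the branching program built in the proof of Theorem~\ref{th:deord-classic}. Concretely, I would take the commutative $\OBDD[k]$ (or commutative $\NOBDD[k]$) of width $w$ for $f$ given by the hypothesis, run the construction of Theorem~\ref{th:deord-classic} to produce the program $P_1$ of width $w\cdot q$ for $\reorder{f}$, and define $g$ to be the function $P_1$ computes on \emph{all} of $\{0,1\}^n$. Since $P_1$ is an ordinary (possibly nondeterministic) leveled branching program it accepts or rejects every input, so $g$ is total and $P_1$ is itself the required $\OBDD[k]$ ($\NOBDD[k]$) representation of $g$ of width $w\cdot q$. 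The one preliminary detail is that the transition applied when reading $y_i$ uses $\mathrm{tr}_P(\pi^{-1}(a+1),\cdot,\cdot)$, which is undefined for addresses $a+1 > q$ (these can occur because $2^{\ceil{\log q}}$ may exceed $q$); I would complete $P_1$ on such addresses by any fixed choice, e.g. routing to a rejecting dead state, which leaves its behaviour on valid inputs untouched.

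For the extension property I would quote the analysis inside Theorem~\ref{th:deord-classic}: whenever the decoded addresses $\bin{z_{i,1},\dots,z_{i,l}}+1$ form a permutation of $\{1,\dots,q\}$ — that is, on every input in the domain of $\reorder{f}$ — the program $P_1$ emulates the reordered copy $P'$ of $P$, and commutativity of $P$ guarantees $P'$ computes $f$, so $P_1$ outputs $\reorder{f}$ there. Hence $g$ coincides with $\reorder{f}$ on the whole domain of $\reorder{f}$, i.e. $g$ is an extension of $\reorder{f}$, and the upper bound $w\cdot q$ is inherited directly from $P_1$.

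The only genuine content is the lower bound, and the key point is that the argument of Theorem~\ref{th:d-obdd} never evaluates the represented function outside the domain of $\reorder{f}$. Given any $\OBDD[k]$ representation $P$ of $g$ with order $\pi$ and width $d' < d$, I would fix an arbitrary permutation $\theta$ of $\{1,\dots,q\}$ and restrict $P$ to the set $\Sigma$ of inputs on which the $i$-th $y$-variable read (say $y_{j_i}$) decodes the address $\theta(i)$, i.e. $\bin{z_{j_i,1},\dots,z_{j_i,l}} = \theta(i)$. Every such input lies in the domain of $\reorder{f}$, so on $\Sigma$ we have $g = \reorder{f}$; replacing each $y_{j_i}$ by $x_{\theta(i)}$ then turns the restricted program into a $\OBDD[k][\theta]$ of width at most $d'$ computing $f$, exactly as in Theorem~\ref{th:d-obdd}. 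Since the hypothesis bounds below by $d$ the width of every $\OBDD[k]$ representation of $f$, in particular the one of order $\theta$, this contradicts $d' < d$; hence every $\OBDD[k]$ for $g$ has width at least $d$.

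The main obstacle to keep in mind is precisely this transfer of the lower bound from the partial function $\reorder{f}$ to its total extension $g$: one must be certain that the freedom in defining $g$ on the invalid inputs cannot be used to build a narrower $\OBDD$. This is exactly what the observation above secures, since the restriction lives entirely inside $\Sigma \subseteq \mathrm{dom}(\reorder{f})$, where $g$ and $\reorder{f}$ agree, and no value of $g$ on an invalid input ever enters the argument. The remaining point worth a line is that the lower-bound hypothesis is assumed for \emph{all} orders, which is what licenses applying it to the particular $\theta$ produced from $\pi$.
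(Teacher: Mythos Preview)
Your proposal is correct and follows essentially the same approach as the paper: define $g$ as the total function computed by the program $P_1$ built in Theorem~\ref{th:deord-classic}, so that $P_1$ itself witnesses the upper bound $w\cdot q$ and $g$ extends $\reorder{f}$; then the lower bound follows from Theorem~\ref{th:d-obdd} because any $\OBDD[k]$ for $g$ is automatically a $\OBDD[k]$ for the partial function $\reorder{f}$. The paper's write-up is terser in that it simply cites Theorem~\ref{th:d-obdd} via the observation ``any $\OBDD[k]$ representation of $g$ also represents $\reorder{f}$'' rather than re-running the restriction-to-$\Sigma$ argument inline as you do, but the content is identical; your extra care about addresses $a+1>q$ is a reasonable detail the paper leaves implicit.
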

\begin{proof}
    By Theorem~\ref{th:d-obdd}, any $\OBDD[k]$ representation of $\reorder{f}$ has width at least $d$. Due to Theorem \ref{th:deord-classic},  there is a $\OBDD[k]$ representation $P$ of $\reorder{f}$ of width $w \cdot q$. Let $g$ be a total Boolean function such that $g(\sigma) = \reorder{f}(\sigma)$ if $\reorder{f}$ is defined on $\sigma$, otherwise let us define $g(\sigma)$ as $P(\sigma)$.

    Let us note that any $\OBDD[k]$ representation of $g$ also represents $\reorder{f}$; as a result, has width at least $d$. Additionally, let us note that $P$ represents $g$.
\end{proof}

\begin{theorem}\label{th:d-qobdd}
    If there is a bounded error commutative $\QOBDD$ representation of a Boolean function $f : \bool^q \to \bool$ of width $w$, then there is a bounded error $\QOBDD$ representation of a partial Boolean function $\xorreorder{f}$ of width $w \cdot q$.
\end{theorem}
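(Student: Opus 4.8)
The plan is to mimic the construction used for the classical case in Theorem~\ref{th:deord-classic}, but to realise every step by a unitary operation so that it yields a genuine $\QOBDD$. Let $P = (T, q_0, \mathrm{Accept}, \pi)$ be the commutative bounded error $\QOBDD$ for $f$ of width $w$, with $T = \set{(G_i^0, G_i^1)}_{i=1}^q$, and set $l = \ceil{\log q}$. I would build a $\QOBDD$ $P'$ acting on the tensor product $\mathcal{H}_{\mathrm{addr}} \otimes \mathcal{H}_{\mathrm{comp}}$, where $\mathcal{H}_{\mathrm{addr}}$ has dimension $2^l$ (it stores the accumulated address as an $l$-bit string) and $\mathcal{H}_{\mathrm{comp}}$ has dimension $w$ (a copy of the state space of $P$); the total width is $2^l \cdot w = O(q \cdot w)$, matching the claimed $w \cdot q$ up to the ceiling defining $l$ (exactly when $q$ is a power of two). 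The initial state is $\ket{0} \otimes \ket{q_0}$, and $P'$ reads variables in the order $z_{1, 1}, \dots, z_{1, l}, y_1, z_{2, 1}, \dots, z_{2, l}, y_2, \dots, z_{q, 1}, \dots, z_{q, l}, y_q$.

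The transitions come in two kinds. When reading an address bit $z_{i, j}$ I apply the identity if $z_{i, j} = 0$ and the Pauli bit-flip $X$ on the $j$-th qubit of $\mathcal{H}_{\mathrm{addr}}$ (tensored with the identity on $\mathcal{H}_{\mathrm{comp}}$) if $z_{i, j} = 1$; both are unitary. The crucial point is that this implements exactly the XOR accumulation defining $\theta$: after the address bits of block $i$ have been read, the address register holds $\bin{\bigoplus_{j'=1}^{i} z_{j', 1}, \dots, \bigoplus_{j'=1}^{i} z_{j', l}} = \theta(i) - 1$, and because XOR is reversible no ``reset'' is needed between blocks. This is precisely why the quantum transformation is available for $\xorreorder{f}$ but not for $\reorder{f}$: resetting a freshly read address, as plain reordering requires, cannot be done reversibly without the previously read bits. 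When reading $y_i$ I apply the address-controlled operator $\sum_{a} \ket{a}\bra{a} \otimes U_a^{y_i}$, where $U_a^{b} = G_{\pi^{-1}(a + 1)}^{b}$ whenever $a + 1 \in \set{1, \dots, q}$ and $U_a^{b}$ is the identity otherwise; being block diagonal with unitary blocks, this operator is unitary. Finally the accepting set consists of all basis states whose $\mathcal{H}_{\mathrm{comp}}$-component lies in $\mathrm{Accept}$, so acceptance is decided by measuring $\mathcal{H}_{\mathrm{comp}}$ against $\mathrm{Accept}$.

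The analysis rests on observing that $\mathcal{H}_{\mathrm{addr}}$ never becomes entangled with $\mathcal{H}_{\mathrm{comp}}$. Since every address operation is classically controlled by an input bit, on a fixed input the address register follows a deterministic trajectory and sits in the definite basis state $\ket{\theta(i) - 1}$ at the boundary of block $i$. Consequently the controlled operator applied when reading $y_i$ acts on $\mathcal{H}_{\mathrm{comp}}$ as the single unitary $G_{\pi^{-1}(\theta(i))}^{y_i}$, so for any input on which $\xorreorder{f}$ is defined (so that $\theta$ is a permutation of $\set{1, \dots, q}$) the compute register evolves exactly as the reordered program obtained from $P$ for the order $(\theta(1), \dots, \theta(q))$. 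By the definition of a commutative $\QOBDD$, that reordered program is a bounded error representation of $f$ with the same $\varepsilon$, and its final compute-register state coincides with that of $P'$; hence the accepting probability of $P'$ equals that of $P$ on $f(y_{\theta^{-1}(1)}, \dots, y_{\theta^{-1}(q)}) = \xorreorder{f}(\dots)$, which preserves the error bound.

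The step I expect to be the main obstacle is verifying the product-state (unentanglement) claim cleanly and using it to collapse the address-controlled operator into a fixed $G$ matrix at each block boundary: this is exactly what lets me invoke the commutativity of $P$ and inherit its error bound verbatim. Everything else — unitarity of the two gate types and the bookkeeping that the register holds $\theta(i) - 1$ at block boundaries — is routine once the XOR accumulation is in place.
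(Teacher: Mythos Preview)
Your proposal is correct and follows essentially the same approach as the paper: both set up the tensor product of an $l$-qubit address register with the width-$w$ compute register, update the address by XOR via controlled $\mathrm{NOT}$ gates on address bits, apply a block-diagonal (address-controlled) unitary $\sum_a \ket{a}\bra{a}\otimes G^{y_i}_{\cdots}$ on value bits, and then invoke commutativity of $P$ to identify the resulting compute-register evolution with that of the reordered $\QOBDD$ $P_\pi$. Your explicit discussion of why XOR accumulation is reversible (and hence why the construction targets $\xorreorder{f}$ rather than $\reorder{f}$), together with the product-state observation that the address register remains in a definite basis state on each fixed input, makes explicit exactly what the paper leaves implicit, but the construction and the correctness argument are the same.
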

\begin{proof}
    Note that if there is a bounded error commutative $\QOBDD$ representation of $f$ of width $w$, then there is a bounded error $\QOBDD[\pi]$ representation $P$ of $f$ of the same width. For the description of a computation in $P$ we use a quantum register $\ket{\psi} = \ket{\psi_1 \psi_2     \dots \psi_t}$ where $t = \ceil{\log w}$.
    
    Let us consider $\xorreorder{f}$. We construct a bounded error $\QOBDD$ $P'$ for $\xorreorder{f}$ with the following order: $z_{1, 1}, \dots, z_{1, l}$, $y_1$, \dots, $z_{q, 1}$, \dots, $z_{q, l}$, $y_q$. This program uses a quantum register of $\ceil{\log w} + \ceil{\log q}$ qubits, i.e. having $w \cdot q$ states. Let us denote this register as $\ket{\phi} = 
    \ket{\phi_1 \phi_2 \dots \phi_l \psi_1 \psi_2 \dots \psi_t}$. 
    
    The part of the register $\ket\phi$ consisting of  $\ket{\psi_1 \psi_2 \dots \psi_t}$ qubits (we call it as a \textit{computing} part) is modified when $P'$ reads a value bit. Additional qubits $\ket{\phi_1 \phi_2 \ldots \phi_l}$ (we call this part an \textit{address} part) is used to determine address of the value bit. 
    
    Program $P'$ consists of $q$ stages, $i$-th stage corresponds to its own block $z_{i, 1}, \dots, z_{i, l}, y_i$. Informally, when $P'$ processes the block, it stores address in the address part by applying the XOR function to address of the current block. After that, the program modifies the computation part, with respect to the value bit.

    Let us describe $i$-th stage, for $i \in \set{1, \dots, q}$. In the first $\ceil{\log q}$ levels of the stage $P$ computes address $\bin{Adr_i} = \bin{\bigoplus\limits_{j=1}^i z_{i, 1}, \dots, \bigoplus\limits_{j=1}^i z_{i, l}}$. The program reads bits one by one, and for a bit $z_{i, j}$ it applies a unitary operator $U^{z_{i, j}}_j$ on the address part of the register $\ket{\phi}$. Here $U^{z_{i, j}}_j = I \otimes I \otimes \ldots \otimes I \otimes 
        A^{z_{i,j}} \otimes I \otimes \ldots \otimes I $, 
    $A^0 = I$, $A^1 = \mathrm{NOT}$. The matrices $I$ and $\mathrm{NOT}$ are $2 \times 2$ matrices such that $I$ is a diagonal $1$-matrix and $\mathrm{NOT}$ is an anti-diagonal $1$-matrix. We do not modify the computation part on these levels.
    
    Note that after all these operations the address part of the register is equal to $Adr_i$. On the last level we read $y_i$ and transform the register $\ket{\phi}$ by an unitary $(w \cdot q \times w \cdot q)$-matrix $D^{y_i}$ defined in the following way:
    $$
        D^0 = \begin{pmatrix}
                  G_1^0 & 0 & \cdots & 0 \\
                  0 & G_2^0 & \cdots & 0 \\         
                  \vdots & \vdots & \ddots & \vdots \\
                  0 & 0 & \cdots & G_q^0
              \end{pmatrix} \text{ and } 
        D^1 = \begin{pmatrix}
                  G_1^1 & 0 & \cdots & 0 \\
                  0 & G_2^1 & \cdots & 0 \\         
                  \vdots & \vdots & \ddots & \vdots \\
                  0 & 0 & \cdots & G_q^1
               \end{pmatrix},
     $$          
     where $\set{(G_i^0, G_i^1)}_{i = 1}^q$ are unitary matrices for transformation of a quantum system in $P$.

     It is easy to see, that width of $P'$ equals $w \cdot q$. Let us prove that $P'$ represents $\xorreorder{f}$ with bounded error. Let us consider an input $\sigma \in \bool^n$ and let 
     $
         \pi = \left(
                  \bin{Adr_1} + 1,             
                                \dots,
                  \bin{Adr_q} + 1
              \right)
     $
     be an order over the value variables induced by $\sigma$. Since $P$ is a commutative bounded error $\QOBDD$ representation of $f$, we can reorder unitary operators $\set{(G_i^0, G_i^1)}_{i = 1}^q$ according to the order $\pi$ and get a bounded error $\QOBDD[\pi]$ $P_{\pi}$ representation of $f$ as well. It is easy to see that $P'$ emulates exactly the computation of $P_{\pi}$. Therefore $P'$ on $\sigma$ gives us the same result as $P_{\pi}$ on corresponding value bits. Hence, by the definition of $\xorreorder{f}$ we prove that $P'$ represents $\xorreorder{f}$ with bounded error.
\end{proof}

\begin{corollary}
    For any positive $k$, if there is a commutative bounded error $\QOBDD[k]$ of width $d$ representing a Boolean function $f : \bool^q \to \bool$, then there is a bounded error $\QOBDD[k]$ of width $d \cdot q$ representing a partial Boolean function $\xorreorder{f}$.
\end{corollary}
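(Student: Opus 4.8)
The plan is to lift the single-layer construction of Theorem~\ref{th:d-qobdd} to all $k$ layers. Recall that a commutative $\QOBDD[k]$ representing $f$ is a composition of $k$ commutative $\QOBDD$ layers, each of width $d$ acting on a computing register of $\ceil{\log d}$ qubits; since each layer is commutative, reordering its transition operators (to process the value variables in any order) yields an equivalent $\QOBDD[k]$ that still represents $f$ with bounded error. First I would, exactly as in Theorem~\ref{th:d-qobdd}, adjoin one shared address register of $\ceil{\log q}$ qubits, so that the total width is $d \cdot q$, and replace each layer by a transformed layer that processes the blocks one at a time: for block $i$ it XORs the bits $z_{i, 1}, \dots, z_{i, l}$ into the address register via the classically controlled operators $A^{z_{i, j}}$ and then reads $y_i$ through the block-diagonal operator $D^{y_i}$, which applies $G_{\theta(i)}^{y_i}$ to the computing register. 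Writing $S_i = \bigoplus_{j \le i}(z_{j, 1}, \dots, z_{j, l})$ for the accumulated address, on a defined input the numbers $\theta(i) = \bin{S_i} + 1$ form a permutation of $\set{1, \dots, q}$, so one transformed layer emulates one reordered layer of $P$.

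The step I expect to be the main obstacle is connecting consecutive layers while keeping the width at $d \cdot q$ and respecting the read-$k$ restriction. After a forward pass the address register holds $\ket{S_q}$ rather than $\ket{0}$, and feeding this leftover into the next layer's cumulative-XOR computation would produce the wrong addresses; the naive fix of uncomputing the address by reading the $z$-variables a second time inside the same layer would read each such variable $2k$ times and break the read-$k$ property.

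I would resolve this by alternating the block order between layers. Odd layers process blocks $1, \dots, q$ forward, walking the address register from $\ket{0}$ up to $\ket{S_q}$. Even layers process blocks $q, \dots, 1$ in reverse: at the start of the layer the address already equals $\ket{S_q}$, so for block $i$ (taken in decreasing order) the register holds $\ket{S_i}$; the layer reads $y_i$ first, applies $D^{y_i}$, and only then XORs $z_{i, 1}, \dots, z_{i, l}$ back out, sending the address to $\ket{S_{i-1}}$, so it walks from $\ket{S_q}$ back down to $\ket{0}$. Because the operators $A^{z_{i, j}}$ are classically controlled, the address register stays in a definite basis state, unentangled with the computing register, throughout; hence each of these layers is a genuine unitary, every variable is read exactly once per layer, and the register value matches at every layer boundary ($\ket{0} \to \ket{S_q} \to \ket{0} \to \cdots$). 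In a reverse layer the value variables are visited in the order $\theta(q), \dots, \theta(1)$, which is equivalent to the forward order precisely by commutativity of the corresponding layer of $P$.

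Finally I would verify correctness. On every input on which $\xorreorder{f}$ is defined, the concatenated program realizes exactly the same overall unitary, and hence the same accepting probability, as the version of $P$ in which each layer is reordered to the order induced by the address bits; by commutativity this is the accepting probability of $P$ itself, which is bounded-error for $f$. For odd $k$ the final address register holds $\ket{S_q} \neq \ket{0}$, but this is harmless because acceptance is decided by measuring only the computing register, which remains in a product state with the address register. This produces a bounded-error $\QOBDD[k]$ of width $d \cdot q$ representing $\xorreorder{f}$, as claimed.
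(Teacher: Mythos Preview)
Your argument is correct and, in fact, more careful than the paper's own proof, which is the single sentence ``The proof of this corollary is the same as the proof of Theorem~\ref{th:d-qobdd}.'' In spirit you follow the same route---lift the single-layer construction to each of the $k$ layers---but you have spotted and repaired a genuine gap the paper passes over: after one layer the address register holds $\ket{S_q}$, not $\ket{0}$, so simply repeating the forward construction in the next layer would compute the wrong addresses, and uncomputing by re-reading the $z$-bits would violate the read-once constraint within a layer. Your alternating forward/backward scheme (odd layers walk $\ket{0}\to\ket{S_q}$, even layers walk $\ket{S_q}\to\ket{0}$ by reading $y_i$ before XOR-ing out $z_{i,*}$) resolves this cleanly: each layer still reads every variable exactly once, the address register stays in a basis state so remains unentangled with the computing register, and the reverse visitation order of the value bits is absorbed by the commutativity of the corresponding layer of $P$. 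The observation that for odd $k$ the residual $\ket{S_q}$ in the address register is harmless because only the computing register is measured is also correct. So your proof both matches the paper's intended approach and fills in the detail it omits.
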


The proof of this corollary is the same as the proof of Theorem~\ref{th:d-qobdd}.

       \section{Commutative $\OBDD$s}\label{sec:com-obdd}
   In this section we discuss a criterion of existence of a small commutative
$\OBDD$ (bounded error $\QOBDD$). We say that a function $f : \bool^n \to \bool$
has a $S_{w, q, \odot}$ representation if there is a sequence of integers
$\set{C_i}_{i = 1}^n$, such that
\[
    f(x_1, \dots, x_n) = 
          q \left(\bigodot\limits_{i = 1}^n C_i x_i \mod w\right),
\]
where $\odot$ is some commutative operation over the set $\set{0, \dots, w - 1}$
and $q : \set{0, \dots, w - 1} \to \bool$.

Let us show that if a function $f$ has a $S_{w, q, \odot}(X)$ representation, then
there is a commutative $\OBDD$ of width $w$ representing $f$.

\begin{theorem}\label{theorem:determenistic-commutative}
    Let $f$ be a Boolean function, such that $f$ has a $S_{w, q, \odot}$
    representation for some $w$, $q$, and $\odot$. Then there is a commutative
    $\OBDD$ of width $w$ representing $f$.
\end{theorem}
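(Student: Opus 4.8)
The plan is to realize the accumulator $\bigodot_{i=1}^{n} C_i x_i \bmod w$ directly as the state of an $\OBDD$. For the natural order $\mathrm{id}$ I would build a program $P$ whose levels each have exactly $w$ nodes, identifying the node set of every level with $\{0, \dots, w-1\}$, so that after reading $x_1, \dots, x_i$ the computation sits in the node labelled by the partial combination $\bigodot_{j=1}^{i}\left(C_j x_j \bmod w\right)$. Taking the source to be the node labelled by the identity element $e$ of $\odot$ (for the concrete operations of interest, e.g.\ addition modulo $w$, this is $0$), I define the transition function uniformly in the position by $\mathrm{tr}_P(i, a, b) = a \odot \left(C_i b \bmod w\right)$, and declare a state $s$ accepting iff $q(s) = 1$. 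The width is visibly $w$.

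Correctness for the order $\mathrm{id}$ is a one-line induction: each step $i$ combines the running value with $C_i x_i \bmod w$, so the final node is $\bigodot_{i=1}^{n}\left(C_i x_i \bmod w\right) = \left(\bigodot_{i=1}^{n} C_i x_i\right) \bmod w$, which is accepting exactly when $f(x) = 1$. Here I use that $\odot$ is not merely commutative but associative, as is implicit in writing the $n$-ary $\bigodot$, and that reducing each operand modulo $w$ before combining agrees with reducing once at the end.

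The substance of the theorem is commutativity in the formal sense of Section~\ref{sec:prlmrs}. Fix any order $\pi'$; the reordered program $P'$ has, by definition, $\mathrm{tr}_{P'}(i, a, b) = \mathrm{tr}_{P}(\pi'(i), a, b) = a \odot \left(C_{\pi'(i)} b \bmod w\right)$, since $\pi = \mathrm{id}$. Running $P'$ on an input $\sigma$ therefore accumulates exactly the same $n$ terms $\{C_i \sigma_i \bmod w\}$ as $P$ does, only in the permuted order $\pi'(1), \dots, \pi'(n)$; by commutativity and associativity of $\odot$ the accumulated value is unchanged, so $P'$ also computes $f$. As this holds for every $\pi'$, the program $P$ is commutative.

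I expect the only delicate point to be the initialization. Because the formal definition of commutativity reorders one fixed, position-independent transition rule, the same rule must already be correct at position~$1$, where it is applied to the source; this forces the source to hold a genuine state value that is neutral under the first application of $\odot$, i.e.\ an identity. I would therefore state at the outset that $\odot$ is taken to be a commutative monoid --- the case covering all instances used later --- rather than an arbitrary commutative operation; were $\odot$ to lack an identity one would be tempted to treat the first read specially, but that destroys the position-uniformity on which the reordering argument relies. The remaining checks (padding short levels with dummy nodes, and the $\bmod w$ bookkeeping) are routine.
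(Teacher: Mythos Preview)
Your proposal is correct and follows essentially the same construction as the paper: nodes of each level are the values in $\{0,\dots,w-1\}$, the transition at position $i$ sends state $z$ under bit $b$ to $z\odot(C_i b)$, acceptance is given by $q$, and commutativity of the $\OBDD$ is inherited from commutativity of~$\odot$. Your write-up is in fact more careful than the paper's, which glosses over both the associativity needed for the $n$-ary $\bigodot$ and the identity-element issue at the source that you flag explicitly.
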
 
\begin{proof}
    Let us construct such an $\OBDD$ with an order $x_1$, \dots, $x_n$.
    We create a node on level $j$ for each possible value of 
    $\bigodot_{i = 1}^{j - 1} C_i x_i \mod{w}$. Then for each node
    corresponding to $z \in \set{0, \dots, w - 1}$ from $j$-th layer there are
    $1$-edge leads to $z \odot C_j  \mod{w}$ and $0$-edge leads to $z \odot 0$.
    We use $q$ as a function that marks accepting nodes on the last layer.

    By the definition of $S_{w, q, \odot}$ and $\OBDD$ this program
    represents $f(x_1, \dots, x_n)$. Due to the commutativity of $\odot$,
    this $\OBDD$ is commutative.
\end{proof}

Note that any characteristic polynomial, discussed
in~\cite{av2013}, has a $S_{w, q, \odot}$ representation for appropriate $w$,
$\odot$, and $q$.

Let us present the definition of these polynomials. We call a polynomial
$G(x_1, \dots, x_n)$ over the ring $\Z_w$ a characteristic polynomial of a
Boolean function $f(x_1, \dots, x_n)$ if for all  $\sigma \in \{0,1\}^n$,
$G(\sigma) = 0$ holds iff $f(\sigma) = 1$.

Ablayev and Vasilev~\cite{av2013} proved, using the fingerprint technique, the
following result.
\begin{lemma}[\cite{av2013}]\label{lm:av-qf}
    If a Boolean function $f$ has a linear characteristic polynomial over
    $\Z_w$, then the function can be represented by a bounded error quantum
    $\OBDD$ of width $O(\log w)$.
\end{lemma}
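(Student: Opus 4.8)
The plan is to prove this directly through the quantum fingerprinting machinery of Section~\ref{sec:def-fingerprint}, taking the modulus to be $m = w$ and the integer map to be the linear characteristic polynomial itself, $g(\sigma) = G(\sigma) = \sum_{j=1}^n c_j \sigma_j \bmod w$, where $c_j \in \Z_w$ are the coefficients of $G$. Since $f(\sigma) = 1$ holds exactly when $G(\sigma) \equiv 0 \pmod{w}$, representing $f$ is precisely the task the fingerprinting technique is built for: separating inputs whose image is $0$ modulo $w$ from the rest. I would fix an error bound $\varepsilon < 1/2$, set $t = \ceil{(2/\varepsilon)\ln 2w}$, and let $K = \set{k_1, \dots, k_t}$ be a good set, whose existence is guaranteed by the good-set lemma of~\cite{akv2008} quoted above.

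Next I would describe the $\QOBDD$ explicitly. It uses a register of $\ceil{\log t} + 1$ qubits, so its width is $O(t) = O(\log w)$ as required; the first $\ceil{\log t}$ qubits form an \emph{index} part indexing the $t$ fingerprint branches and the last qubit is the \emph{rotation} qubit. I initialize by applying $H^{\otimes \ceil{\log t}}$ to the index part, producing the uniform superposition $\frac{1}{\sqrt t}\sum_{i=1}^t \ket{i}\ket{0}$ (if $t$ is not a power of two we round up, which costs only constants). When the program reads bit $x_j$ it applies $V_j^{x_j}$, where $V_j^0 = I$ and $V_j^1$, conditioned on index value $i$, rotates the rotation qubit about the $\hat{y}$ axis by angle $\frac{2\pi k_i c_j}{w}$. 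Each $V_j^1$ is block diagonal across the index with single-qubit rotation blocks, hence unitary, and this is exactly where linearity is used: the angles $\frac{2\pi k_i c_j x_j}{w}$ add, so after all $n$ bits branch $i$ carries total angle $\frac{2\pi k_i G(\sigma)}{w}$ and the register holds precisely the fingerprint $\ket{h_\sigma}$.

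Finally I would apply $H^{\otimes \ceil{\log t}}$ to the index part again and measure in the computational basis, declaring $\mathrm{Accept}$ to be the all-zero outcome. By the computation reproduced in Section~\ref{sec:def-fingerprint}, the acceptance probability equals $\frac{1}{t^2}\left(\sum_{i=1}^t \cos\frac{2\pi k_i G(\sigma)}{w}\right)^2$, which is $1$ when $G(\sigma) \equiv 0 \pmod{w}$, i.e. $f(\sigma)=1$, and at most $\varepsilon$ when $G(\sigma) \not\equiv 0 \pmod w$ by the good-set property of $K$. This yields a one-sided bounded-error $\QOBDD$ representation of $f$ of width $O(\log w)$.

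The routine parts are verifying unitarity of the $V_j^1$ and the power-of-two padding of $t$. The single conceptual point worth stressing — and the only place the hypothesis is genuinely exploited — is that linearity of $G$ is exactly what permits a read-once program to accumulate $G(\sigma) \bmod w$ into the phase with one pass over each variable; a characteristic polynomial of higher degree would couple variables and so would not admit this incremental, single-read evaluation, which is precisely why the lemma is stated for the linear case.
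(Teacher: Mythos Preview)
Your proposal is correct and is exactly the standard fingerprinting construction that \cite{av2013} carries out; it instantiates the machinery of Section~\ref{sec:def-fingerprint} with $m=w$ and $g=G$, and the width bound $O(t)=O(\log w)$ falls out immediately. Note, however, that the paper does not give its own proof of Lemma~\ref{lm:av-qf}: it is quoted as a result of Ablayev and Vasiliev~\cite{av2013} and used as a black box (the only in-paper use is the one-line derivation of Theorem~\ref{theorem:commutative}). So there is nothing to compare against beyond observing that your reconstruction matches the fingerprinting technique the paper summarises in Section~\ref{sec:def-fingerprint} and attributes to~\cite{av2013,akv2008,an2008,an2009}.
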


It is easy to see that by a linear characteristic polynomial we can construct
$S_{w, q, +}(X)$ representation, where $q$ converts $0$ to $1$ and other values
to $0$. Let us denote such a function $q$ as $q_0$. 

Note that in contrast with Theorem~\ref{theorem:determenistic-commutative},
the quantum fingerprint technique gives us a commutative $\QOBDD$ of a logarithmic
width. Unifying these techniques we can prove the following theorem.

\begin{theorem}\label{theorem:commutative}
    If a Boolean function $f$ has a $S_{w, q_0, +}$ representation
    for some $w$, then there is a commutative bounded error $\QOBDD$
    for $f$ of width $O(\log w)$.
\end{theorem}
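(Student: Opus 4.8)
The plan is to show that the quantum $\OBDD$ produced by the fingerprinting technique of Section~\ref{sec:def-fingerprint} (underlying Lemma~\ref{lm:av-qf}) is already commutative once the linear polynomial is read by single-qubit rotations about one fixed axis. First I would unpack the hypothesis: an $S_{w, q_0, +}$ representation of $f$ means there are integers $C_1, \dots, C_n$ with $f(x_1,\dots,x_n) = q_0(\sum_{i=1}^n C_i x_i \bmod w)$, i.e.\ $f(\sigma)=1$ exactly when $\sum_i C_i \sigma_i \equiv 0 \pmod{w}$. Hence $G(x)=\sum_i C_i x_i$ is a linear characteristic polynomial of $f$ over $\Z_w$, so Lemma~\ref{lm:av-qf} already supplies a bounded error $\QOBDD$ of width $O(\log w)$; the genuine content is to exhibit one that is commutative.

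Second, I would describe the construction concretely, instantiating Section~\ref{sec:def-fingerprint} with $m=w$ and $g(\sigma)=G(\sigma)$. The register is $\ket{i}\ket{b}$, with an index part of $\ceil{\log t}$ qubits ($t=\ceil{(2/\varepsilon)\ln 2w}$) and one fingerprint qubit. I would define the per-variable operator $G_j^{x_j}$ to be the identity for $x_j=0$ and, for $x_j=1$, the index-controlled rotation that on each $\ket{i}$ rotates the fingerprint qubit about the $\hat{y}$-axis by angle $2\pi k_i C_j / w$. The uniform superposition over indices and the final $H^{\otimes \log t}$ that collects the cosine amplitudes are input- and order-independent; to keep these fixed parts inside the rigid $\QOBDD$ model without creating ``special'' first and last steps, I would conjugate each operator by the index-register Hadamard $H_{\mathrm{idx}}$, starting from a computational basis state and measuring in the computational basis. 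Consecutive Hadamards telescope, so the overall evolution is exactly the fingerprinting procedure, accepting with probability $1$ when $\sum_i C_i \sigma_i \equiv 0 \pmod{w}$ and at most $\varepsilon$ otherwise.

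Third, the heart of the argument, I would verify that these operators pairwise commute. On every index subspace $\ket{i}$ they act as single-qubit rotations about the common $\hat{y}$-axis, and rotations about a fixed axis commute; the $0$-branch identities commute with everything, and conjugation by a fixed unitary preserves commutativity. Thus $G_a^{x_a} G_b^{x_b}=G_b^{x_b} G_a^{x_a}$ for all $a,b$. Consequently, for any reordering $\pi'$ the reordered program $P'$ applies the product $\prod_i G_{\pi^{-1}(\pi'(i))}^{x_{\pi'(i)}}$, which is the same multiset of operators as in $P$ arranged in a different order and therefore equal to it; hence $P'$ reaches the same final state and accepts with the same probability, which is precisely the definition of a commutative $\QOBDD$. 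Finally I would record the width: $\ceil{\log t}+1$ qubits give at most $O(t)=O(\log w)$ basis states.

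I expect the main obstacle to be the bookkeeping of the fixed, input-independent parts (the index-register Hadamards realizing state preparation and final amplitude collection) so that \emph{every} variable-reading operator is a pure controlled rotation and the commutativity argument is not spoiled by special endpoints; the conjugation trick above is the clean way to fit them into the model. Everything else reduces to the elementary fact that same-axis rotations commute, together with the error bound already provided by the fingerprinting lemma.
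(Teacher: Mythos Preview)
Your proposal is correct and follows the same approach as the paper: observe that an $S_{w,q_0,+}$ representation is exactly a linear characteristic polynomial over $\Z_w$ and invoke the fingerprinting construction of Lemma~\ref{lm:av-qf}. In fact you go further than the paper's very terse proof, which does not spell out the commutativity argument at all; your explicit verification that the per-variable operators are controlled rotations about a single axis (and hence pairwise commute), together with the conjugation trick to absorb the index-register Hadamards into each step, fills in precisely the details the paper leaves implicit.
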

\begin{proof}
    Let a Boolean function $f$ has a $S_{w, q_0, +}$ representation for some $w$.
    It means that the function has a linear characteristic polynomial over $\Z_w$. Then by
    Lemma~\ref{lm:av-qf} one may construct a bounded error quantum $\OBDD$ of
    width $O(\log w)$ representing $f$.
\end{proof} 

    \section{Exponential Gap between Quantum and Classical $\OBDD$s}\label{sec:gap}
 As we discussed in the introduction, it is known that the maximal gap between
quantum and deterministic $\OBDD$ complexities of Boolean functions is exponential.

\begin{lemma}[\cite{AGKMP05}]\label{lm:max-gap}
If the best $\OBDD$ representation of a Boolean function
    $f$ has width $2^w$. Then any bounded error $\QOBDD$ for $f$ has width at least $w$.
\end{lemma}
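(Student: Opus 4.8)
The plan is to establish the equivalent statement that the best deterministic $\OBDD$ width is at most $2^{O(W)}$, where $W$ is the width of a bounded error $\QOBDD$ for $f$; since the best deterministic width is assumed to be exactly $2^w$, this forces $W = \Omega(w)$, which is the asserted exponential relationship. The engine is the standard characterization of deterministic $\OBDD$ width as a subfunction count: for a fixed order $\pi$, the minimal width of an $\OBDD[1][\pi]$ representing $f$ equals the maximum over levels $j$ of the number of distinct subfunctions obtained by fixing the first $j$ variables (in the order $\pi$) and letting the remaining variables vary. It therefore suffices to bound this subfunction count along the order used by a given quantum program.

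First I would fix a bounded error $\QOBDD$ $P$ of width $W$, order $\pi$, and error margin $\varepsilon$. For a level $j$ and each assignment $x$ to the first $j$ variables (in the order $\pi$), let $\ket{\psi_x} \in \mathbb{C}^W$ be the pure state reached by $P$ after reading $x$. The remaining computation on a suffix $y$ consists of unitary steps followed by the final measurement, so the acceptance probability on the concatenated input equals $\bra{\psi_x} M_y \ket{\psi_x}$ for a measurement operator $0 \preceq M_y \preceq I$ that depends only on $y$. The next step is to argue that \emph{distinct subfunctions force well-separated states}: if the subfunctions selected by $x$ and $x'$ differ, there is a suffix $y$ with, say, $f = 1$ on $(x,y)$ and $f = 0$ on $(x',y)$, and the bounded error guarantee gives $\bra{\psi_x} M_y \ket{\psi_x} \ge \frac12 + \varepsilon$ while $\bra{\psi_{x'}} M_y \ket{\psi_{x'}} \le \frac12 - \varepsilon$. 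Thus the two-outcome measurement $\{M_y, I - M_y\}$ distinguishes the two density operators with bias at least $2\varepsilon$, so by the Holevo--Helstrom bound their trace distance is at least $4\varepsilon$; for pure states this reads $|\braket{\psi_x}{\psi_{x'}}|^2 \le 1 - 4\varepsilon^2$.

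Finally I would count. Choosing one representative prefix per subfunction, the associated representative states form a set of points in the projective space $\mathbb{CP}^{W-1}$ that are pairwise separated by a constant depending only on $\varepsilon$ in the Fubini--Study (chordal) metric. Since $\mathbb{CP}^{W-1}$ has real dimension $2(W-1)$, a standard volume/packing estimate shows that any $\delta$-separated set has size at most $(C/\delta)^{2(W-1)} = 2^{O(W\log(1/\varepsilon))}$, which is $2^{O(W)}$ for constant $\varepsilon$. Hence the number of distinct subfunctions at every level along $\pi$ is $2^{O(W)}$, so $f$ has a deterministic $\OBDD[1][\pi]$ of width $2^{O(W)}$; the best deterministic $\OBDD$ therefore has width at most $2^{O(W)}$, and comparing with $2^w$ yields $W = \Omega(w)$.

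I expect the crux to be the separation-to-packing step. The subtlety is that one must convert the purely operational ``distinguishing'' condition (acceptance probabilities landing on opposite sides of $\frac12$) into a genuine metric separation of the states via Helstrom, and then count \emph{states} in projective space rather than sign patterns of the acceptance functionals $\langle \cdot \,, M_y \cdot\rangle$ over all suffixes $y$; counting sign patterns would introduce a spurious dependence on the (possibly huge) number of suffixes, whereas the packing count depends only on $W$ and $\varepsilon$. Tracking the packing constant carefully is what recovers the clean exponential form of the statement, and the tightness of the exponent is witnessed by $\MOD_{p,n}$, whose $\Theta(\log p)$ quantum width against deterministic width $p$ shows the relationship cannot be improved beyond constants.
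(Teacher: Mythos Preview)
The paper does not supply a proof of this lemma; it is quoted as a cited result of Ablayev, Gainutdinova, Karpinski, Moore, and Pollett~\cite{AGKMP05} and used as a black box in Section~\ref{sec:hierarchies}. There is therefore no proof in the present paper against which to compare your proposal.

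For what it is worth, your outline is the standard route for this kind of bound: pass to the pure state reached after a prefix, convert the bounded-error guarantee on some distinguishing suffix into a metric separation via Holevo--Helstrom, and then bound the number of pairwise-separated states by a volume/packing argument in $\mathbb{CP}^{W-1}$. One point to keep in mind is that the lemma as stated asserts the constant-free inequality ``width at least $w$'', whereas your packing argument naturally yields $W = \Omega(w)$ with a constant depending on the error margin $\varepsilon$; the paper's actual use of the lemma (inside the proof of Theorem~\ref{th:quntum-hi}) only needs the $\Omega$ version, so this discrepancy is harmless here, but if you want the exact form you would need to consult the constants in~\cite{AGKMP05} rather than rely on a generic packing bound.
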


But all the examples that achieve an exponential gap have a sublinear width of a
bounded error quantum $\OBDD$ representation. Known examples with a bigger width
do not achieve this gap. We present results for two functions, based on equality
function that achieve almost exponential gap.

\subsection{Application of Reordering Method} 
Let us apply the reordering method to \textit{equality function}  ($\EQ_n :
\bool^{2n} \to \bool$) where $\EQ_n(x_1, \dots, x_n, y_1, \dots, y_n) = 1$, iff
$x_1 = y_1$, \dots, $x_n = y_n$.

Authors of the paper~\cite{akv2008} have proven that there is a commutative $\QOBDD$ of
width $O(n)$ representing $\EQ_n$ with bounded error. Hence, due to Theorem~\ref{th:d-qobdd}, there is a bounded error $\QOBDD$ of width
$O(q^2)$ for 
$\xorreorder{\EQ_q}$.

It is well-known that any $\OBDD[\id]$ representation of $\EQ_n$ has width at
least $2^n$. As a result, by Theorem~\ref{th:d-obdd} any $\OBDD$ representation
of $\xorreorder{\EQ_q}$ has width at least $2^q$. So, if we apply the Theorems \ref{th:d-obdd} and \ref{th:deord-classic} to the function then we get the following result:

\begin{theorem}
    There is a bounded error quantum $\OBDD$ representation of a partial
    Boolean function $\xorreorder{\EQ_q} : \bool^n \to \bool$ of width
    $O\left(\frac{n^2}{\log^2 n}\right)$, for $n=q\lceil \log q +1 \rceil$; 
    any deterministic $\OBDD$ representation of $\xorreorder{\EQ_q}$ has
    width at least $2^{\Omega\left(\frac{n}{\log n}\right)}$.
\end{theorem}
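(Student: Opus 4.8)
The plan is to read both halves of the statement off the two transformation theorems already in hand, feeding each of them one classical fact about the equality function and then doing the parameter bookkeeping. The quantum upper bound will come from Theorem~\ref{th:d-qobdd} applied to a commutative $\QOBDD$ for $\EQ_q$, and the deterministic lower bound from Theorem~\ref{th:d-obdd} applied to the natural-order lower bound for $\EQ_q$. In both cases the final step is the same translation: since $n = q\ceil{\log q + 1} = \Theta(q \log q)$, we have $q = \Theta(n / \log n)$, which converts every bound phrased in $q$ into the bound phrased in $n$ that the theorem asserts.

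For the upper bound, I would first recall from~\cite{akv2008} that $\EQ_q$ has a bounded error commutative $\QOBDD$ of width $O(q)$; this is the fingerprinting construction, and I would note that it is commutative in the formal sense of the preliminaries because acceptance depends only on whether every coordinate comparison succeeds, so permuting the transition matrices $G$ leaves the computed function unchanged. Applying Theorem~\ref{th:d-qobdd} with $w = O(q)$ then produces a bounded error $\QOBDD$ for $\xorreorder{\EQ_q}$ of width $O(q) \cdot q = O(q^2)$. Substituting $q = \Theta(n/\log n)$ gives width $O(n^2 / \log^2 n)$, exactly as claimed.

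For the lower bound, I would invoke the standard fact that any $\OBDD[1][\id]$ representation of $\EQ_q$ has width at least $2^q$: reading all first-half variables before any second-half variable forces the program to distinguish all $2^q$ possible prefixes. Taking $k = 1$, $\theta = \id$, and $d = 2^q$ in Theorem~\ref{th:d-obdd} promotes this single-order bound into one that holds for \emph{every} order, so any $\OBDD$ representation of $\xorreorder{\EQ_q}$ has width at least $2^q = 2^{\Omega(n / \log n)}$. This is precisely the payoff of the reordering construction: the $2^q$ barrier, which for $\EQ_q$ itself holds only for the natural order, now survives for all variable orderings.

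I expect no genuine obstacle in the argument, since each half is an immediate corollary of a theorem proved earlier; the only real care lies in verifying that the hypotheses of those theorems are actually met. Specifically, the one point worth checking carefully is that the fingerprinting $\QOBDD$ for $\EQ_q$ is commutative as Theorem~\ref{th:d-qobdd} requires, and that the factor of two coming from $\EQ_q$ being a function of $2q$ input variables is harmless, being absorbed into the asymptotic relation $n = \Theta(q \log q)$ and hence into the $\Theta$, $O$, and $\Omega$ estimates.
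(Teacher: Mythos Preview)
Your proposal is correct and follows exactly the paper's own argument: the upper bound cites the commutative fingerprinting $\QOBDD$ for $\EQ$ from~\cite{akv2008} and applies Theorem~\ref{th:d-qobdd}, while the lower bound cites the $2^q$ natural-order width bound for $\EQ$ and applies Theorem~\ref{th:d-obdd}. Your explicit treatment of the $q = \Theta(n/\log n)$ conversion and of the factor-of-two arising from $\EQ_q$ having $2q$ inputs is in fact more careful than the paper, which leaves these points implicit.
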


Let us define \textit{xor-reordered equality function} ($\REQ_q : \bool^n \to
\bool$ where $n = 2q (\ceil{\log 2q} + 1)$). This is a total version of
$\xorreorder{\EQ_q}$. Let us consider
\[
    u(z_{1, 1}, \dots, z_{2q, l}, y_1, \dots, y_{2q}) = 
    \sum\limits_{
        i : Adr(i,z_{1, 1}, \dots, z_{2q, l}) < q
    }
   2^{Adr(i,z_{1, 1}, \dots, z_{2q, l})} y_i \mod 2^q
\] 
and 
\[
    v(z_{1, 1}, \dots, z_{2q, l}, y_1, \dots, y_{2q}) = 
    \sum\limits_{
        i : Adr(i,z_{1, 1}, \dots, z_{2q, l}) \geq q
    }
    2^{Adr(i,z_{1, 1}, \dots, z_{2q, l}) - q} y_i \mod 2^q,
\] 
 where $Adr(i,z_{1, 1}, \dots, z_{2q, l})=\bin{\bigoplus\limits_{j=1}^i z_{j, 1}, \dots, \bigoplus\limits_{j=1}^i z_{j, l}}$
We define $\REQ_q(z_{1, 1}, \dots, z_{2q, l}, y_1, \dots, y_{2q}) = 1$ iff 
$u(z_{1, 1}, \dots, z_{2q, l}, y_1, \dots, y_{2q}) = 
v(z_{1, 1}, \dots, z_{2q, l}, y_1, \dots, y_{2q})$. The function has two following properties:
\begin{lemma}\label{lm:deq-dlower}
    Any $\OBDD$ representation of $\REQ_q$ has width at least
    $2^{\Omega\left(\frac{n}{\log n }\right)}$, where $n=2q\lceil 1+\log 2q \rceil$ is the length of an input.
\end{lemma}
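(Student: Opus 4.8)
The plan is to derive this width bound as a direct application of the reordering machinery in Theorem~\ref{th:d-obdd}, combined with the observation that $\REQ_q$ is a \emph{total extension} of the partial function $\xorreorder{\EQ_q}$. The crucial point is that any $\OBDD$ representing the total function $\REQ_q$ in particular outputs the correct value on every input on which $\xorreorder{\EQ_q}$ is defined, and hence also represents $\xorreorder{\EQ_q}$. Consequently any width lower bound for $\xorreorder{\EQ_q}$ transfers verbatim to $\REQ_q$, and it suffices to exhibit a single order $\theta$ of the $2q$ variables of $\EQ_q$ under which every $\OBDD[\theta]$ representation needs width at least $2^q$, then invoke Theorem~\ref{th:d-obdd} (instantiated with $f = \EQ_q$, so that the ``$q$'' of the reordering construction is played by $2q$, and $k = 1$).

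First I would verify the extension claim, which is the one genuinely delicate step. On an input for which the addresses $Adr(1) + 1, \dots, Adr(2q) + 1$ form a permutation of $\set{1, \dots, 2q}$, every position $p \in \set{0, \dots, 2q - 1}$ is hit by exactly one value bit; write $b_p := y_i$ for the unique $i$ with $Adr(i) = p$. Then $u$ is the $q$-bit number with bits $b_0, \dots, b_{q - 1}$ and $v$ is the $q$-bit number with bits $b_q, \dots, b_{2q - 1}$, so the defining condition $u = v$ holds iff $b_p = b_{p + q}$ for all $p \in \set{0, \dots, q - 1}$. This is exactly $\EQ_q$ evaluated on the reordered value bits $(b_0, \dots, b_{2q - 1})$, i.e.\ the value of $\xorreorder{\EQ_q}$ on that input. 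Hence $\REQ_q$ and $\xorreorder{\EQ_q}$ agree wherever the latter is defined, confirming that $\REQ_q$ is a total extension.

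Next I would supply the order-specific lower bound for $\EQ_q$. Take $\theta$ to be the balanced order that reads $x_1, \dots, x_q$ before $y_1, \dots, y_q$. After the whole $x$-block has been read, the $2^q$ distinct assignments to $x_1, \dots, x_q$ must reach $2^q$ pairwise distinct nodes of that level: if two distinct assignments $x$ and $x'$ reached the same node, then completing both inputs with $y := x$ would force identical subsequent computations, yet $\EQ_q(x, x) = 1$ while $\EQ_q(x', x) = 0$, a contradiction. Thus every $\OBDD[\theta]$ for $\EQ_q$ has width at least $2^q$. Applying Theorem~\ref{th:d-obdd} with this $\theta$ then gives that every $\OBDD$ representation of $\xorreorder{\EQ_q}$, and therefore (by the extension argument) every $\OBDD$ representation of $\REQ_q$, has width at least $2^q$.

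Finally I would translate the bound into the input length $n$. Since $n = 2q\ceil{1 + \log 2q} = \Theta(q \log q)$, we have $q = \Omega(n / \log n)$, and hence the width lower bound $2^q$ is $2^{\Omega(n / \log n)}$, as claimed. I expect the only nontrivial work to be the matching of the arithmetic condition $u = v$ with the combinatorial equality condition of $\xorreorder{\EQ_q}$ in the second paragraph; the remaining steps are a standard fooling-set count for equality and a direct appeal to the already established Theorem~\ref{th:d-obdd}.
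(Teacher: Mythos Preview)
Your proposal is correct and follows essentially the same approach as the paper: observe that $\REQ_q$ is a total extension of $\xorreorder{\EQ_q}$, use the $2^q$ width lower bound for $\EQ_q$ under the natural order together with Theorem~\ref{th:d-obdd}, and convert $2^q$ to $2^{\Omega(n/\log n)}$. The paper's proof simply asserts the extension property and the $\EQ$ lower bound as known, whereas you spell out both; your added verifications are accurate and do not deviate from the paper's line of argument.
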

\begin{proof}
    Note that $\REQ_q$ is an extension of $\xorreorder{\EQ_q}$. Thus any
    $\OBDD$ representation of $\REQ_q$ also represents $\xorreorder{\EQ_q}$.
    Hence, by Theorem~\ref{th:d-obdd} any $\OBDD$ representation of $\REQ_q$ has
    width at least $2^q \ge 2^{\Omega\left(\frac{n}{\log n }\right)}$.
\end{proof}

\begin{theorem} \label{th:deq}
    There is a bounded error quantum $\OBDD$ representation of $\REQ_q$ of
    width $O\left(\frac{n^2}{(\log n)^2}\right)$, where $n = 2q\lceil 1+\log 2q \rceil$ is the length of an input.
\end{theorem}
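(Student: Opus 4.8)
The plan is to obtain the $\QOBDD$ for $\REQ_q$ as the output of the reordering construction of Theorem~\ref{th:d-qobdd} applied to $\EQ_q$, and then to verify that the resulting program computes the \emph{specific} total extension $\REQ_q$, not merely the partial function $\xorreorder{\EQ_q}$. First I would recall why $\EQ_q$ admits a commutative bounded error $\QOBDD$ of width $O(q)$: the polynomial $G(x_1,\dots,x_q,y_1,\dots,y_q)=\sum_{i=1}^q 2^{i-1}x_i-\sum_{i=1}^q 2^{i-1}y_i$ is a linear characteristic polynomial of $\EQ_q$ over $\Z_{2^q}$ (it vanishes iff $\bin{x}=\bin{y}$, i.e. iff $x=y$), so Lemma~\ref{lm:av-qf} together with Theorem~\ref{theorem:commutative} yields a commutative $\QOBDD$ of width $O(\log 2^q)=O(q)$. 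Applying Theorem~\ref{th:d-qobdd} to this program produces a $\QOBDD$ $P'$ for $\xorreorder{\EQ_q}$ of width $O(q)\cdot 2q=O(q^2)$, since $\EQ_q$ has $2q$ value variables and the address register contributes the factor $2q$.

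The core of the argument is to analyze the phase that $P'$ accumulates on an arbitrary input, including inputs on which $\xorreorder{\EQ_q}$ is undefined. In the reordering construction the address part of the register holds $Adr(i,z_{1,1},\dots,z_{2q,l})$ after the $i$-th block has been read, and the block-diagonal operator $D^{y_i}$ applies to the computing part the fingerprint rotation attached to position $Adr(i)+1$. Because the underlying commutative program is the fingerprint of $G$, the rotation attached to a position $p$ is proportional to $2^{p-1}y_i$ when $p\le q$ and to $-2^{p-q-1}y_i$ when $p>q$. Summing the contributions of all $2q$ blocks, $P'$ therefore computes the fingerprint of
\[
    \sum_{i:\,Adr(i)<q} 2^{Adr(i)} y_i \;-\; \sum_{i:\,Adr(i)\ge q} 2^{Adr(i)-q} y_i
    \;=\; u-v \pmod{2^q},
\]
with $u$ and $v$ exactly the functions appearing in the definition of $\REQ_q$. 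Crucially, this identity does not require the addresses $Adr(1),\dots,Adr(2q)$ to form a permutation of $\set{0,\dots,2q-1}$, so it holds on every input.

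I would then conclude by invoking the properties of the fingerprint from Subsection~\ref{sec:def-fingerprint}: the final measurement accepts with probability $1$ when the accumulated value is $0\bmod 2^q$ and with probability at most $\varepsilon$ otherwise. Since $\REQ_q(\sigma)=1$ iff $u=v$, i.e. iff $u-v\equiv 0\pmod{2^q}$, the program $P'$ is a bounded error $\QOBDD$ representation of $\REQ_q$. Finally, from $n=2q\lceil 1+\log 2q\rceil=\Theta(q\log q)$ we get $q=\Theta(n/\log n)$, so the width $O(q^2)$ equals $O\!\left(n^2/(\log n)^2\right)$, as claimed.

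The main obstacle I anticipate is the middle step: verifying that the phase accumulated by $P'$ is \emph{exactly} $u-v\bmod 2^q$ for the particular total extension $\REQ_q$, rather than some other extension of $\xorreorder{\EQ_q}$. This requires carefully tracking which fingerprint coefficient each value bit $y_i$ contributes when its address $Adr(i)$ is repeated or fails to form a permutation pattern, and matching the threshold $Adr(i)<q$ versus $Adr(i)\ge q$ to the split between $u$ and $v$. Once this bookkeeping is shown to agree with the definition of $\REQ_q$ on all inputs, the bounded error guarantee follows immediately from the fingerprinting lemma.
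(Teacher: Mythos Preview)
Your proposal is correct and follows essentially the same approach as the paper: start from the fingerprint-based commutative $\QOBDD$ for $\EQ_q$ of width $O(q)$, apply the xor-reordering construction of Theorem~\ref{th:d-qobdd} to obtain width $O(q^2)$, and observe that on \emph{every} input the accumulated fingerprint phase equals $u-v\bmod 2^q$, so the resulting program computes the specific total extension $\REQ_q$. If anything, your articulation of why the construction works on inputs where the addresses do not form a permutation is more explicit than the paper's own proof.
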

\begin{proof}
    A computing
    of the Boolean function $\REQ_q$ is equivalent to checking the equality of
    $g(z_{1, 1}, \dots, z_{2q, l}, y_1, \dots, y_{2q}) = c_1(z_{1, 1}, \dots, z_{2q, l})\cdot y_1 + \dots + c_{2q}(z_{1, 1}, \dots, z_{2q, l})\cdot y_{2q}$ and $0$. So we
    can construct a commutative $\QOBDD$ $P$ with one side error using the quantum
    fingerprinting technique, similar to the proof of Lemma \ref{lm:av-qf}.

    By the definition, $\REQ_q(z_{1, 1}, \dots, z_{2q, l}, y_1, \dots, y_{2q}) = 1$
    iff
    \[
        \sum\limits_{
            i : Adr(i,z_{1, 1}, \dots, z_{2q, l})< q
        } 2^{Adr(i,z_{1, 1}, \dots, z_{2q, l})} y_i
        - \\
        \sum\limits_{
                i : Adr(i,z_{1, 1}, \dots, z_{2q, l}) \geq q
            } 2^{Adr(i,z_{1, 1}, \dots, z_{2q, l}) - q} y_i 
        \equiv 
        0 
        \pmod{2^q}.
    \]

    Let us take QOBDD $P'$ for $EQ(x_1,\dots,x_{2q})$ from Theorem 2 of \cite{akv2008}. This program is based on fingerprinting technique and checking $g'(x)=0 \pmod{2^{q}}$ for $g'(x)=\sum\limits_{
            i=1
        }^{q} 2^{i-1} x_i
        -
        \sum\limits_{
                i =q+1
            }^{2q} 2^{i-q-1} x_i$. This program is commutative and use $\log_2 t+1$ qubits, where $t=O(q)$. The $j$-th angle that is used for rotating about the $\hat{y}$ axis of the Bloch sphere on $x_i$  is $\alpha_{i,j}$, for $j\in\{1,\dots,t\}$. Here $\alpha_{i,j}=\left(4\pi k_j 2^{i}\right)/2^{q}$, if $i< q$; $\alpha_{i,j}=-\left(4\pi k_j 2^{i-q}\right)/2^{q}$, otherwise.
            In the end of the rotation process for the program is state
            \[|\psi\rangle = \frac{1}{\sqrt{t}}\sum_{j=1}^t|i\rangle \left(cos\frac{2\pi k_j g'(x)}{2^{n/2}}|0\rangle+ sin\frac{2\pi k_j g'(x)}{2^{n/2}}|0\rangle\right)\]
            
            So, we apply xor-reordering method to this commutative QOBDD and we use different angels for different addresses. Therefore the angles becomes $\alpha'_{i,j}$, for $j\in\{1,\dots,t\}$. Here $\alpha'_{i,j}=\left(4\pi k_j 2^{Adr(i,z_{1, 1}, \dots, z_{2q, l})}\right)/2^{q}$, if $i< q$; $\alpha'_{i,j}=-\left(4\pi k_j 2^{Adr(i,z_{1, 1}, \dots, z_{2q, l}) -q}\right)/2^{q}$, otherwise. Hence we get the following state in the end of the rotation process of the quantum QOBDD $P$:
            \[|\psi\rangle = \frac{1}{\sqrt{t}}\sum_{j=1}^t|i\rangle \left(cos\frac{2\pi k_j g(x)}{2^{n/2}}|0\rangle+ sin\frac{2\pi k_j g(x)}{2^{n/2}}|0\rangle\right)\]
            
            After that $P$ applies remaining steps of fingerprinting techniques and gets the answer with bounded error.
            Let us compute width of $P$. Width of $P'$ is $O(q)$. Because of reordering method, width of $P$ is $O(q^2)\leq O(n^2/(\log n)^2)$.
\end{proof}

\subsection{Shifted Equality}

In order to get another separation between quantum and classical $\OBDD$
complexities let us consider \textit{shifted equality function} ($\SEQ_q :
\bool^{2q + l} \to \bool$ where $l = \ceil{\log q}$), the function introduced by
JaJa, Prasanna, and Simon~\cite{JKS84}. The function is defined in the following
manner:
$\SEQ_n(x_1, \dots, x_q, y_1, \dots, y_q, s_1, \dots, s_l) = 1$ 
iff for all $i \in \{1,\dots,q\}$, 
$
    x_i = y_{(i + \bin{s_1, \dots, s_l}) \pmod{q}}.
$

Using a lower bound for the best communication complexity of this
function~\cite{JKS84} and the well-known connection between $\OBDD$ and
communication complexities we have the following property.
\begin{lemma}[see for example~\cite{KN97}]
        Any $\OBDD$ representation of $\SEQ_q$ has a size at least $2^{\Omega(n)}$, where $n=2q+\ceil{\log q}$ is the length of an input.
\end{lemma}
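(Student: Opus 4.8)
The plan is to combine the standard connection between $\OBDD$ size and communication complexity with the lower bound on the best-partition communication complexity of $\SEQ_q$ from~\cite{JKS84}. Recall that an $\OBDD$ of width $w$ with a fixed order $\pi$ induces, for any cut of $\pi$ into a prefix and a suffix, a deterministic one-way communication protocol: the player holding the prefix variables simulates the program up to the cut, reaching one of at most $w$ nodes, and transmits its index in $\ceil{\log w}$ bits to the player holding the suffix variables, who completes the computation. Consequently $w \ge 2^{C(A,B)}$, where $(A,B)$ is the bipartition of the $n = 2q + \ceil{\log q}$ variables induced by the cut and $C(A,B)$ is the communication complexity of $\SEQ_q$ under this bipartition.

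First I would fix an arbitrary order $\pi$ and cut it at its middle level, which yields a balanced bipartition $(A,B)$ of the variables. The $l = \ceil{\log q}$ shift variables $s_1, \dots, s_l$ are few, so assigning all of them to one side changes only lower-order terms; the essential object is the resulting nearly balanced bipartition of the $2q$ equality variables $x_1,\dots,x_q,y_1,\dots,y_q$. Fixing $s$ to an adversarial value reduces $\SEQ_q$ to an equality test between $x_i$ and $y_{(i+\bin{s_1,\dots,s_l}) \bmod q}$, and the content of~\cite{JKS84} is exactly that for every balanced bipartition of the equality variables there is a shift forcing $\Omega(q)$ of these matched pairs to straddle the cut; this yields $C(A,B) = \Omega(q) = \Omega(n)$ by a standard fooling-set argument for equality.

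Combining the two estimates gives $w \ge 2^{\Omega(n)}$ for every order $\pi$, and since the size of an $\OBDD$ is at least its width, the claim follows. I expect the main obstacle to be the book-keeping certifying that the nearly balanced bipartition forced by an \emph{arbitrary} $\OBDD$ order lies in the family of bipartitions covered by the~\cite{JKS84} best-partition lower bound; in particular one must check that relegating the $O(\log q)$ shift variables to one party cannot be used to evade the crossing matched pairs, which holds because those $O(\log q)$ extra bits are dominated by the $\Omega(n)$ term.
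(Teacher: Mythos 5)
Your proposal is correct and takes essentially the same route as the paper, which proves this lemma purely by citing the best-partition communication complexity lower bound for shifted equality from~\cite{JKS84} together with the standard reduction from a width-$w$ $\OBDD$ cut to a one-way protocol of cost $\ceil{\log w}$. Your write-up simply makes explicit the details the paper leaves implicit (cutting an arbitrary order at the middle, the $O(\log q)$ shift variables being absorbed into lower-order terms, the adversarial choice of shift forcing $\Omega(q)$ crossing pairs, and the fooling-set bound for equality), all of which are sound.
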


We can also construct a bounded error quantum $\OBDD$ representation of $\SEQ_q$ of a
small width.
\begin{lemma}
     There is a bounded error quantum $\OBDD$ representation for $\SEQ_q$ of
     width $O(n^2)$, where $n=2q+\ceil{\log q}$ is the length of an input.
\end{lemma}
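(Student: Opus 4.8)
The plan is to reduce $\SEQ_q=1$ to a single modular equation and then run the quantum fingerprinting technique, paying a factor $q$ in width to account for the shift. First I would observe that ``$x_i=y_{(i+s)\bmod q}$ for all $i$'' is exactly the statement that the $q$-bit string $x$ equals the cyclic rotation by $s$ of the $q$-bit string $y$. Encoding both as integers, $X=\sum_{i=1}^{q}2^{i-1}x_i$ and $Y_s=\sum_{j=1}^{q}2^{((j-s)\bmod q)-1}y_j$ (with the residue taken in $\{1,\dots,q\}$), a cyclic shift merely permutes the binary weights $2^{0},\dots,2^{q-1}$, so $\SEQ_q=1$ iff $g(x,y,s):=X-Y_s=0$. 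Since $0\le X,Y_s\le 2^{q}-1$ we have $|g|<2^{q}$, hence $g=0$ is equivalent to $g\equiv 0\pmod{2^{q}}$. This yields a characteristic condition that is \emph{linear} in the value bits $x,y$, but whose coefficients on the $y_j$ depend on the shift $s$.

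Next I would build the $\QOBDD$ reading $s_1,\dots,s_l$ first, then $x_1,\dots,x_q$, then $y_1,\dots,y_q$. Its Hilbert space factors as an address register of dimension $q$ (the $\ceil{\log q}$ qubits holding $s$) tensored with a fingerprint register of dimension $O(q)$ (the $\ceil{\log t}+1$ qubits of Lemma~\ref{lm:av-qf}, where the modulus is $2^{q}$ and $t=\ceil{(2/\varepsilon)\ln 2^{q+1}}=O(q)$). Reading $s_1,\dots,s_l$ I would write $s=\bin{s_1,\dots,s_l}$ into the address register by controlled-$\mathrm{NOT}$ gates, exactly as the address is computed in the proof of Theorem~\ref{th:d-qobdd}. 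Reading each $x_i$ I would rotate the fingerprint qubit by the angles for weight $2^{i-1}$, the same operator in every $s$-subspace. Reading each $y_j$ I would apply an $s$-controlled rotation: conditioned on the address register holding the value $s$, rotate by the angle for weight $-2^{((j-s)\bmod q)-1}$. These operators are block-diagonal over the $q$ values of the address register and act as genuine rotations on each block, hence are unitary. Finally I would apply $H^{\otimes\ceil{\log t}}$ to the index qubits and measure, accepting on the all-zero outcome, as in Section~\ref{sec:def-fingerprint}.

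For correctness, within the subspace selected by the true value of $s$ the construction is exactly the equality fingerprint of \cite{akv2008} applied to $g(x,y,s)$, so the acceptance probability is $\frac{1}{t^2}\left(\sum_{j=1}^{t}\cos\frac{2\pi k_j g(x,y,s)}{2^{q}}\right)^2$, which equals $1$ when $g\equiv 0\pmod{2^{q}}$ and is at most $\varepsilon$ otherwise, provided $K=\{k_1,\dots,k_t\}$ is chosen good for every nonzero residue modulo $2^{q}$ via the good-set lemma. The width is the product of the two register dimensions, $O(q)\cdot q=O(q^2)=O(n^2)$, since $n=2q+\ceil{\log q}=\Theta(q)$. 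I expect the main obstacle to be precisely the $s$-dependence of the $y$-coefficients: unlike plain equality, the fingerprint must apply shift-dependent rotation angles, which forces $s$ to be kept as a basis state and the $y$-rotations to be controlled on it. This is exactly what costs the extra factor $q$ over the $O(q)$ width of the ordinary equality fingerprint (compare the analogous $O(q^2)$ bound for $\REQ_q$ in Theorem~\ref{th:deq}), and one must still check that the $q$ block-diagonal rotations are genuinely unitary and that a single good set $K$ works uniformly across all shifts $s$.
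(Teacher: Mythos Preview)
Your proposal is correct and follows essentially the same approach as the paper: read $s$ first and store it in an address register of $\ceil{\log q}$ qubits, then run the equality fingerprint on a computational register of $O(\log q)$ qubits, applying $s$-controlled rotations when processing the $y$-variables; the width is the product $O(q)\cdot O(q)=O(n^2)$. Your write-up is in fact more explicit than the paper's own proof (you spell out the characteristic polynomial $g(x,y,s)$, the modular reduction, and why a single good set $K$ suffices for all shifts), but the construction and the width accounting are the same.
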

\begin{proof}
    Let us construct a $\QOBDD$ $P$ that reads an input in the following order: 
    $s$, then $x$, and then $y$; also $P$ uses a quantum register consisting of
    two parts: the first part $\ket{\phi}$ is for storing the value of the shift
    ($\bin{s_1, \dots, s_l}$) and the second one $\ket{\psi}$ is called
    a computational part. The size of $\ket{\phi}$ is $\ceil{\log q}$ qubits and
    the size of $\ket{\psi}$ is $\log q + C$, for some constant $C$.

    On the first $\ceil{\log q}$ levels, the program stores input bits
    into $\ket{\phi}$ using a storing procedure similar to procedure from the
    proof of Theorem~\ref{th:d-qobdd}.

    Then we apply the fingerprint algorithm from~\cite{AV08,av2009}, but use
    unitary matrices for $y$ with shift depending on the state of $\ket{\phi}$.

    After reading the last variable, we measure $\ket{\psi}$ and get the
    answer.

    The width of the program is $2^{\ceil{\log q} + \log q + C} = O(q^2)= O(n^2)$.
\end{proof}

It is interesting to compare this separation and the separation obtained in the
previous subsection. In the case of $\SEQ_q$, the lower bound for $\OBDD$ width is
$2^{\Omega(n)}$, but for $REQ$ function, it is $2^{\Omega(\frac{n}{\log n})}$.
On the other hand, the upper bound for the width of $\QOBDD$ is also larger.

    \section{Hierarchy for Probabilistic and Quantum
        $\OBDD$s}\label{sec:hierarchies}
    In this section, we consider classes $\BPOBDD_d$ and $\BQOBDD_d$ of Boolean functions that can be represented by bounded error probabilistic and quantum $\OBDD$s of width $O(d)$, respectively. We prove hierarchies for these classes with respect to
$d$.

\subsection{Hierarchy for Probabilistic $\OBDD$s.} 
Before we start proving the hierarchy let us consider a Boolean function
$\WS_n$, or \textit{weighted sum} function introduced by Savick{\`y} and
{\v{Z}}{\'a}k~\cite{sz2000}.

Let $n > 0$ be an integer and let $p(n)$ be the smallest prime greater than $n$.
Let us define functions $s_n : \bool^n \to \bool$ and $\WS_n : \bool^n \to
\bool$, such that
$s_n(x_1, \dots, x_n) = \left(\sum\limits_{i=1}^{n} i \cdot x_i\right) \mod p(n)$ 
and $\WS_n(x_1, \dots, x_n) = x_{s_n(x_1, \dots, x_n)}$. For the function
$\WS_n$ it is known \cite{sz2000} that any bounded error probabilistic $\OBDD$ representing
$\WS_n$ has width at least $2^{\Omega(n)}$.

Let $\set{f_n : \bool^n \to \bool}_{n \in \N}$ be a family of Boolean functions
and $b : \N \to \N$ be a function such that $b(n) \le n$. 
We denote by $\set{f^b_n : \bool^n \to \bool}_{n \in \N}$
the family of Boolean functions such that $f^b_n(x_1, \dots, x_n) =
f_{b(n)}(x_1, \dots, x_{b(n)})$. 
\begin{remark}
\label{remark:padding}
    If for any $\OBDD$ (bounded error $\POBDD$ or $\QOBDD$) representation of $f_n$
    has width at least $w(n)$, then $\OBDD$ (bounded error $\POBDD$ or $\QOBDD$)
    representation of $f^b_n$ has width at least $w(b(n))$. Moreover, if there is an
    $\OBDD$ (bounded error $\POBDD$ or $\QOBDD$) representation of $f_n$ of width
    $d(n)$, then there is an $\OBDD$ (bounded error $\POBDD$ or $\QOBDD$)
    representation of $f^b_n$ of width $d(b(n))$.
\end{remark}

We apply this construction and get modification of the Boolean function $\WS_n$ using padding bits. We will denote
this modified function as $\WS^b_n$. Let $n > 0$ and $b > 0$ be integers, such
that $b \le \frac{n}{3}$ and $p(b)$ be the smallest prime greater than $b$.
We denote by $\WS^b_n : \bool^n \to \bool$ a function such that 
$\WS^b_n(x_1, \dots, x_n) = x_{s_b(x_1, \dots, x_b)}$.

In order to use Remark \ref{remark:padding} we need the following lemma:
\begin{lemma}[\cite{sz2000}]\label{lm:ws-base}
    Any bounded error probabilistic
    $\OBDD$ that represents $\WS_n$ has width at least $2^{\Omega(n)}$ and
    there is a bounded error probabilistic $\OBDD$ of width $2^n$ representing
    $\WS_n$.
\end{lemma}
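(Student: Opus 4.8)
The final statement to prove is Lemma~\ref{lm:ws-base}, which makes two claims about the weighted sum function $\WS_n$: a lower bound of $2^{\Omega(n)}$ on the width of any bounded error probabilistic $\OBDD$ representing it, and a matching upper bound construction of width $2^n$. The lemma is attributed to Savick\'y and \v{Z}\'ak~\cite{sz2000}, so the plan is to invoke their lower bound directly and supply an elementary construction for the upper bound.

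\medskip

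\noindent\textbf{The lower bound.} The $2^{\Omega(n)}$ lower bound for bounded error probabilistic $\OBDD$s is exactly the hard part, and it is the content of the cited result of Savick\'y and \v{Z}\'ak. I would not reprove it from scratch; instead I would state that this is precisely the property of $\WS_n$ established in~\cite{sz2000} and already quoted in the surrounding text (``it is known that any bounded error probabilistic $\OBDD$ representing $\WS_n$ has width at least $2^{\Omega(n)}$''). If one wished to sketch the underlying idea, it rests on the fact that the index $s_n(x_1, \dots, x_n) = \left(\sum_{i=1}^n i \cdot x_i\right) \bmod p(n)$ mixes information about all input bits, so that no matter the order in which the variables are read, a constant fraction of them must be ``remembered'' across some cut; a communication-complexity / cut-and-counting argument then forces exponentially many states even in the randomized setting. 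But for the purposes of this lemma, citing~\cite{sz2000} suffices.

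\medskip

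\noindent\textbf{The upper bound.} The width-$2^n$ representation is the step I would actually carry out, and it is routine. In fact, a width-$2^n$ deterministic (hence bounded error probabilistic) $\OBDD$ exists trivially for \emph{any} Boolean function on $n$ variables: read the variables in the natural order $\id = (1, \dots, n)$, and at level $j$ keep one node for each of the $2^{j-1}$ possible prefixes $(x_1, \dots, x_{j-1})$ read so far; after reading all $n$ bits the full input is determined, so the accepting set can be chosen to match $\WS_n$ exactly. The maximum number of nodes on any level is at most $2^{n}$ (attained around the middle or the last level, depending on how one counts), giving width $2^n$. Since a deterministic $\OBDD$ is a special case of a bounded error probabilistic $\OBDD$ (it errs with probability $0$), this establishes the upper bound. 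One could give a slightly tighter bound, but the lemma only asks for $2^n$, so no optimization is needed.

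\medskip

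\noindent\textbf{Expected obstacle.} There is essentially no obstacle here: the nontrivial direction is imported wholesale from~\cite{sz2000}, and the constructive direction is the trivial complete-binary-tree $\OBDD$. The only point requiring minor care is making sure the upper-bound construction is stated for the probabilistic model (which follows immediately by regarding the deterministic $\OBDD$ as a degenerate probabilistic one), and confirming that the $2^{\Omega(n)}$ lower bound is robust to the padding applied later via Remark~\ref{remark:padding}, so that $\WS^b_n$ inherits a width lower bound of $2^{\Omega(b)}$. That robustness is exactly what Remark~\ref{remark:padding} is designed to supply, so the lemma as stated is complete once both bounds are in place.
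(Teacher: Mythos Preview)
Your proposal is correct and matches the paper's treatment: the lemma is stated with attribution to \cite{sz2000} and no proof is given in the paper itself, so citing the lower bound and supplying the trivial complete-binary-tree $\OBDD$ for the upper bound is exactly appropriate. The extra paragraph about robustness under padding is not needed for the lemma as stated (that is precisely the job of Remark~\ref{remark:padding} downstream), but it does no harm.
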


Let us prove the hierarchy theorem for $\BPOBDD_d$ classes using these
properties of the Boolean function $\WS_n^b$.
\begin{theorem}\label{th:prob-hi}
    If $d$ and $\delta$ are functions such that $d(n) = o(2^n)$,
    $d(n) = \omega(1)$, and $\delta(n) = \omega(1)$, then
    $\BPOBDD_{d^{1 / \delta}} \subsetneq \BPOBDD_d$.
\end{theorem}
\begin{proof}
    It is easy to see that $\BPOBDD_{d^{1 / \delta}} \subseteq \BPOBDD_d$. Let
    us prove the inequality of these classes. Due to Lemma~\ref{lm:ws-base} and Remark \ref{remark:padding}, the
    Boolean function $\WS_n^{\log d} \in \BPOBDD_d$. However, any bounded error
    probabilistic $\OBDD$ representing $\WS_n^{\log d}$ has width
    $2^{\Omega(\log d)}$ that is greater than $d^{1 / \delta}$ since $d =
    \omega(1)$. Therefore, $\WS_n^{\log d} \not\in \BPOBDD_{d^{1 / \delta}}$.
\end{proof}

\subsection{Hierarchy for Quantum $\OBDD$s.}
In this subsection we consider similar modifications of three well-known functions:
$\REQ_n$, $\MOD_{p, n}$, and $\MSW_n$ (defined in~\cite{s06}).
The function $\MSW_n$ may be defined in the following way:
$\MSW_n (x_1, \dots, x_n) = x_z \oplus x_{r + n / 2}$, where 
$z = s_{n / 2}(x_1, \dots, x_{n / 2})$ and 
$r = s_{n / 2}(x_{n / 2 + 1}, \dots, x_n)$, if  $r = z$ and 
$\MSW_n(x_1, \dots, x_n) = 0$ otherwise.

We will use Remark \ref{remark:padding}. So, we need the following two lemmas.
\begin{lemma}[\cite{s06}]
\label{lemma:msw}
    Any bounded error quantum $\OBDD$ representation of $\MSW_n$ has width
    at least $2^{\Omega(n)}$ and there is a bounded error quantum $\OBDD$ of
    width $2^n$ representing $\MSW_n$.
\end{lemma}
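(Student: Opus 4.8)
The plan is to treat the two bounds separately, since the upper bound is routine and the lower bound carries all the content. For the upper bound, recall that \emph{every} Boolean function on $n$ variables has a trivial deterministic $\OBDD$ of width at most $2^{n-1}\le 2^n$: fixing any order, the node reached after scanning a prefix simply records that prefix, and the last level carries the function values. Since a deterministic $\OBDD$ of width $w$ is precisely the special case of a $\QOBDD$ of width $w$ whose operators $G_i^0,G_i^1$ are permutation matrices, this already yields an exact (hence bounded error) $\QOBDD$ for $\MSW_n$ of width $2^n$, which settles the upper bound.

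For the lower bound I would pass to one-way quantum communication complexity. The estimate must hold for \emph{every} order, so I fix the order $\pi$ used by an arbitrary given width-$w$ $\QOBDD$ for $\MSW_n$ and cut $\pi$ in the middle, obtaining a balanced partition $(A,B)$ of the variables with $|A|=|B|=n/2$, where $A$ is the set of variables read before the cut. Using the standard simulation, such a $\QOBDD$ induces a one-way quantum protocol using $\lceil\log w\rceil$ qubits: the first party applies the operators for the variables in $A$, sends the resulting $w$-dimensional state, and the second party applies the remaining operators and measures. Hence it suffices to show that for \emph{every} balanced partition the one-way quantum communication complexity of $\MSW_n$ (sender holding $A$) is $\Omega(n)$; that gives $\lceil\log w\rceil=\Omega(n)$, i.e.\ $w=2^{\Omega(n)}$.

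The heart of the argument is embedding a hard communication problem under an \emph{arbitrary} balanced partition, and the natural target is a quantum random access code, for which Nayak's bound forces $\Omega(m)$ qubits to transmit $m$ bits recoverable one-way with bounded error. The crucial feature of $\MSW_n$ is the storage-access structure of the weighted sums $s_{n/2}$: because their coefficients run through $1,\dots,n/2$ and the modulus $p(n/2)$ is prime, the pointer $z=s_{n/2}(X)$ depends on \emph{all} of the first-half variables, so for any partition one can restrict the variables in one part (and some in the other) so that the pointer is steered across the cut to address a linear-size block of ``data'' variables lying on the sender's side. By additionally restricting the second-half variables so that the guard $z=r$ is satisfiable and the second lookup $x_{r+n/2}$ contributes a known constant, the output collapses to $x_z\oplus\mathrm{const}$, i.e.\ to reading the $z$-th bit of an $\Omega(n)$-size array held by the sender while the receiver supplies the address. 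Applying Nayak's bound to the induced protocol then gives $\lceil\log w\rceil=\Omega(n)$.

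The main obstacle is precisely this arbitrary-partition embedding. Unlike a fixed partition into two clean halves, a general cut splits each weighted-sum block across both parties, and the pointer variables and the data variables of a single block come from the \emph{same} half, so one cannot simply declare one side ``data'' and the other ``index.'' The work lies in showing, by an averaging/counting argument over the partition together with the steering flexibility of the prime-weighted sum, that for every balanced $(A,B)$ enough data variables fall on the sender's side while sufficient pointer control remains on the receiver's side to realize a random access code of size $\Omega(n)$ in the hard (data-to-index) direction; handling the matching guard $z=r$ without destroying this structure is the delicate point, and it is exactly what the ``mixed'' design of $\MSW_n$ is meant to guarantee for all orderings.
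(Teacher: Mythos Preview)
The paper does not prove this lemma at all: it is quoted verbatim from~\cite{s06} and used as a black box, with no argument given. So there is nothing in the present paper to compare your proposal against.

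That said, your plan is the right shape and matches the original proof in~\cite{s06}. The upper bound is indeed trivial for the reason you state. For the lower bound, the original argument proceeds exactly via a one-way quantum communication lower bound for $\MSW_n$ under an arbitrary balanced partition, reduced to a random access code and closed with Nayak's bound; the point of the ``mixed'' construction is precisely that for \emph{every} balanced cut one can fix variables so that a linear-size block of data bits lands on the sender's side while the receiver retains enough control over the weighted sum to steer the pointer into that block, with the guard $z=r$ satisfiable. Your sketch identifies all of this correctly.

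The only place where your write-up is still genuinely incomplete is the embedding step you yourself flag as the main obstacle. The sentence ``an averaging/counting argument over the partition together with the steering flexibility of the prime-weighted sum'' is a description of what has to happen, not an argument that it does. In the actual proof one has to be explicit about which side (first half vs.\ second half) supplies the data block, show that at least $\Omega(n)$ of the appropriate variables fall on the sender's side for \emph{some} choice of which weighted sum plays the role of the index, and verify that the remaining variables on the receiver's side can realize every index value modulo $p(n/2)$ while simultaneously satisfying $z=r$ and fixing the second lookup bit. None of this is deep, but it does require a short case analysis on how the partition splits the two halves, and you should not leave it at the level of ``this is what $\MSW_n$ is meant to guarantee.''
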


\begin{lemma}[\cite{AGKMP05,AV08}]
\label{lemma:mod}
    Any bounded error quantum $\OBDD$ representation of $\MOD_{p, n}$ 
    (for $p \le n$) has width at least $\floor{\log p}$ and there is a bounded
    error quantum $\OBDD$ of width $O(\log p)$ representing $\MOD_{p, n}$.
\end{lemma}

Now we are ready to prove the main theorem of this section.
\begin{theorem}\label{th:quntum-hi}
    Let $d : \N \to \N$ and $\delta : \N \to \N$ be functions such that $d(n) =
    \omega(1)$ and $\delta(n) = \omega(1)$.
    \begin{itemize}
        \item If $d(n) \le \log n$ for all $n$, then
            $\BQOBDD_{\frac{d}{\delta}} \subsetneq \BQOBDD_{d}$;
        \item If $d(n) \le n$ for all $n$, then
            $\BQOBDD_{\frac{d}{\log^2 d}} \subsetneq \BQOBDD_{d^2}$;

        \item If $d(n) \le 2^n$ for all $n$, then
            $\BQOBDD_{d^{1/\delta}}\subsetneq \BQOBDD_d$.
    \end{itemize}
\end{theorem}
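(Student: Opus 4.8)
The plan is to mirror the structure of the probabilistic hierarchy (Theorem~\ref{th:prob-hi}): in each of the three regimes the inclusion $\subseteq$ is immediate (since $d/\delta \le d$, $d/\log^2 d \le d^2$, and $d^{1/\delta}\le d$ for all large $n$), so the work is to exhibit, for each regime, a family of Boolean functions lying in the larger class but not in the smaller one. The key is that the three width scales — logarithmic, polynomial, and exponential — are matched by three functions with width bounds of the right shape, each combined with Remark~\ref{remark:padding} to tune the effective input size. For the \emph{logarithmic} regime $d\le\log n$ I would use $\MOD_{p,n}$; for the \emph{polynomial} regime $d\le n$ I would use $\REQ$; and for the \emph{exponential} regime $d\le 2^n$ I would use $\MSW_n$, exactly as $\WS_n$ was used in the probabilistic case.

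The first and third parts are the clean analogues. For $d(n)\le\log n$, for each $n$ I would pick a prime $p=p(n)$ with $\lfloor\log p\rfloor=\Theta(d(n))$; since $d(n)\le\log n$ forces $2^{d(n)}\le n$, Bertrand's postulate supplies such a $p$ with $p\le n$, so Lemma~\ref{lemma:mod} applies. It gives a bounded error $\QOBDD$ for $\MOD_{p,n}$ of width $O(\log p)=O(d)$, placing it in $\BQOBDD_d$, while any bounded error $\QOBDD$ for it needs width at least $\lfloor\log p\rfloor=\Omega(d)=\omega(d/\delta)$ (as $\delta=\omega(1)$), so it escapes $\BQOBDD_{d/\delta}$. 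For $d(n)\le 2^n$, I would take the padded function $\MSW^{\lceil\log d\rceil}_n$; here $\log d\le n$ makes the padding of Remark~\ref{remark:padding} valid, Lemma~\ref{lemma:msw} gives width $2^{\lceil\log d\rceil}=\Theta(d)$ so the function lies in $\BQOBDD_d$, and its quantum width lower bound $2^{\Omega(\log d)}=d^{\Omega(1)}=\omega(d^{1/\delta})$ keeps it out of $\BQOBDD_{d^{1/\delta}}$.

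For the \emph{polynomial} middle regime I would use a padded copy of $\REQ$ with scale parameter $q=\Theta(d(n))$. By Theorem~\ref{th:deq} its bounded error quantum $\OBDD$ width is $O(q^2)=O(d^2)$, so it lies in $\BQOBDD_{d^2}$. The crucial point is obtaining a matching quantum \emph{lower} bound: we only have the deterministic bound of Lemma~\ref{lm:deq-dlower}, namely width $2^{\Omega(q)}$, so I would invoke the maximal-gap Lemma~\ref{lm:max-gap} to deduce that every bounded error $\QOBDD$ for this function has width at least $\Omega(q)=\Omega(d)$, which exceeds $d/\log^2 d$ since that quotient is $o(d)$; hence the function is not in $\BQOBDD_{d/\log^2 d}$. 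I expect \textbf{this middle regime to be the main obstacle}: the quantum lower bound is only indirect (routed through the deterministic bound and Lemma~\ref{lm:max-gap}), and the logarithmic overhead in the encoding length of $\REQ$, where the input size $n\approx 2q\lceil 1+\log 2q\rceil$ grows like $q\log q$, is precisely what forces the $\log^2 d$ slack in the statement. The delicate step is therefore choosing the padding so that the base input length $2q\lceil 1+\log 2q\rceil$ stays below $n$ while keeping $q=\Theta(d)$, and checking that the resulting logarithmic factors are absorbed by the gap between $d^2$ and $d/\log^2 d$.
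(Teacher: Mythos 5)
Your overall plan is the same as the paper's: $\MOD$ in the logarithmic regime, padded $\REQ$ plus the maximal-gap Lemma~\ref{lm:max-gap} in the middle regime, and padded $\MSW$ in the exponential regime, all glued together with Remark~\ref{remark:padding}. Your first and third bullets are handled essentially exactly as in the paper (the paper does not even need Bertrand's postulate: Lemma~\ref{lemma:mod} as stated does not require a prime modulus, so it simply uses $\MOD_{2^{d},n}$, which satisfies $2^{d}\le n$ when $d\le\log n$).

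The genuine gap is in the middle regime, at precisely the point you flagged but did not resolve. You require the $\REQ$ scale parameter to satisfy $q=\Theta(d(n))$, yet the witness must be a function of $n$ bits, so its base input length $2q\lceil 1+\log 2q\rceil$ must be at most $n$. When $d$ is near the top of the allowed range --- e.g.\ $d(n)=n$, or any $d=\omega(n/\log n)$ --- these two demands are contradictory: $q=\Theta(d)$ forces base length $\Theta(d\log d)=\omega(n)$, so your witness function does not exist and the claimed lower bound $\Omega(q)=\Omega(d)$ has nothing to apply to. ``Checking that the logarithmic factors are absorbed'' cannot rescue an unsatisfiable constraint. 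The fix, which is what the paper actually does, is to pad so that the \emph{base input length} equals $d$, i.e.\ take $\REQ^{d}_{n}$ in the notation of Remark~\ref{remark:padding}; this corresponds to $q=\Theta(d/\log d)$. Then Theorem~\ref{th:deq} gives a bounded error $\QOBDD$ of width $O(q^{2})=O(d^{2}/\log^{2}d)=O(d^{2})$, so the function lies in $\BQOBDD_{d^{2}}$, while Lemma~\ref{lm:deq-dlower} combined with Lemma~\ref{lm:max-gap} gives a quantum lower bound of order $d/\log d$ --- weaker than your $\Omega(d)$, but still $\omega(d/\log^{2}d)$ because $\log d=\omega(1)$, so the function escapes $\BQOBDD_{d/\log^{2}d}$. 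Your diagnosis that the encoding overhead of $\REQ$ is what forces the $\log^{2}d$ slack in the statement was correct; the parameterization you chose is the one incompatible with that diagnosis.
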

\begin{proof}
    It is easy to see that for any $d' \le d$, $\BQOBDD_{d'} \subseteq
    \BQOBDD_{d}$. Let us prove the inequalities.

    Due to Lemma~\ref{lemma:mod}, the Boolean function $\MOD_{2^d, n} \in
    \BQOBDD_d$. However, width of any bounded error quantum $\OBDD$ representing
    $\MOD_{2^d, n}$ is $\Omega(d)$. Therefore $\MOD_{2^d, n} \not\in
    \BQOBDD_{d / \delta}$.

    Due to Theorem~\ref{th:deq}, the Boolean
    function $\REQ_n^d \in \BQOBDD_{d^2}$. On the contrary by Lemma \ref{lm:max-gap}, Lemma~\ref{lm:deq-dlower}
    and Remark~\ref{remark:padding} width of any bounded error quantum $\OBDD$
    representing $\REQ_n^d$ is at least $\floor{\frac{d}{\ceil{\log d + 1}}}$.
    Therefore, $\REQ_n^d \not\in \BQOBDD_{\frac{d}{\log^2 d}}$.

    Due to Lemma~\ref{lemma:msw} and Remark \ref{remark:padding}, the Boolean function 
    $\MSW_n^{\log d} \in \BQOBDD_d$. However, width of any bounded error
    quantum $\OBDD$ representing $\MSW_n^{\log d}$ is at least 
    $2^{\Omega(\log d)}$. Therefore,
    $\MSW_n^{\log d} \not\in \BQOBDD_{d^{1 / \delta}}$.
\end{proof} 

    \section{Extension of Hierarchies for Deterministic and Probabilistic
        $\OBDD[k]$s}\label{sec:kobdd-hrch}
 This section shows the separation between $\OBDD[k]$s and $\OBDD[2k]$s
using the reordering method and a lower bound for complexity of
\textit{pointer jumping} function also denoted as $\PJ$~\cite{NW93,bssw96}

At first, let us present a version of the pointer jumping function which works
with integer numbers. Let $V_A$ and $V_B$ be two disjoint sets of vertices with
$|V_A| = |V_B| = m$ and $V = V_A \cup V_B$ . 
Let $F^A = \set{f^A : V_A \to V_B}$, $F^B = \set{f^B : V_B \to V_A}$ and 
$f = (f^A, f^B) : V \to V$ defined by the following rule: 
\begin{itemize}
    \item if $v \in V_A$, then $f(v) = f^A(v)$ and
    \item if $v \in V_B$, then $f(v) = f^B(v)$. 
\end{itemize}
For each $k \ge 0$ we define $f^{(k)}(v)$ such that $f^{(0)}(v) = v$ and 
$f^{(k + 1)}(v) = f(f^{(k)}(v))$. Let $v_0 \in V_A$, The function we are
interested in is $g_{k, m} : F^A \times F^B \to V$ such that 
$g_{k, m}(f^A, f^B) = f^{(k)}(v_0)$. 

The Boolean function $\PJ_{k,m} : \bool^{2m\lceil\log_2m\rceil} \to \bool$ is a Boolean version of
$g_{k, m}$ where we encode $f^A$ as a binary string using $m \lceil\log_2m\rceil$ bits and
$f^B$ as well. The result of the function is the parity of bits of the binary
representation for the resulted vertex.

We apply the reordering method to the $\PJ_{k, m}$ function and call the total
version of it, obtained from Corollary~\ref{cr:total-reordering}, as 
$\RPJ_{k, m}$.

Note that to prove an upper bound for $\RPJ_{2k - 1, m}$ it is necessary
to construct a commutative $\OBDD[2k]$ for $\PJ_{2k - 1, m}$. In order to
prove a lower bound for $\RPJ_{2k - 1, m}$ it is necessary to prove
a lower bound for $\PJ_{2k - 1, m}$.

For proving the lower bound, we need a notion of communication complexity.
Let $h  : \bool^q \times \bool^m \to \bool$ be a Boolean function. We have two
players called Alice and Bob, who have to compute $h(x, y)$. The function $h$ is
known by both of them. However, Alice knows only bits of $x$ and Bob knows only
bits of $y$. They have a two-sided communication channel. On each round of their communication, one of them sends a string. Alice and Bob are trying to minimize two parameters: a total number of sent bits and number of rounds.  For the formal
definition see for example~\cite{KN97}.

Additionally, we say that the $k$-round communication complexity with Bob sending
first of a function $h(x_1, \dots, x_q, y_1, \dots, y_m)$ equals to $c$ iff the minimal number of sent bits of $k$-round communication protocols with Bob sending first is equal to $c$. We denote this complexity as $\C[B, k]{h}$ if this protocol is deterministic and $\C[B, k]{h}[\epsilon]$ for probabilistic one with bounded error $\epsilon$.

\begin{lemma}[{\cite{NW93}}]\label{lemma:communication-hierarchy-determenistic}
    $\C[B, k]{\PJ_{k, m}} = \Omega(m - k \log{m})$ for any $k$.
\end{lemma}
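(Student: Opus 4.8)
The plan is to establish the communication complexity lower bound $\C[B, k]{\PJ_{k, m}} = \Omega(m - k \log m)$ by exhibiting a fooling-set style argument that forces Bob, who speaks first, to transmit a substantial fraction of the information about the pointer that Alice ultimately needs. I would first unpack the round structure: in a $k$-round protocol with Bob sending first, the players alternate $k$ messages. Since the pointer jumping $f^{(k)}(v_0)$ requires following the chain of pointers $v_0 \to f(v_0) \to f^{(2)}(v_0) \to \cdots$ that alternates between Alice's function $f^A$ and Bob's function $f^B$, the order of speaking is exactly wrong for following the chain greedily: the player who knows the next pointer is never the one whose turn it is to speak at the crucial moment, unless enough information was pushed across earlier. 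The key combinatorial fact I would exploit is that to determine $f^{(k)}(v_0)$, the protocol must effectively transmit the identity of an intermediate vertex, and each vertex label costs $\lceil \log m \rceil$ bits, while the ``budget'' of savings from $k$ rounds is only $O(k \log m)$.

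The concrete approach I would take is a round-reduction (or round-elimination) induction on $k$. I would set up the claim that any $k$-round protocol with Bob-first for $\PJ_{k,m}$ that uses fewer than roughly $m - c \cdot k \log m$ bits can be converted into a $(k-1)$-round protocol for $\PJ_{k-1, m'}$ on a slightly smaller instance, by fixing Bob's first message to its most popular value and restricting attention to the inputs consistent with it. Fixing the first message partitions Alice's inputs into at most $2^{(\text{length of first message})}$ classes; by averaging, one class is large, and on that class the first pointer step is essentially determined, reducing the number of pointer hops by one and flipping whose turn it is to speak first so that the recursion stays Bob-first on the reduced function. Each such step costs at most $O(\log m)$ in the instance size and strips one round, so unwinding the recursion $k$ times and invoking the base case — a one-round (or zero-round) protocol for a single pointer lookup, which trivially needs $\Omega(\log m)$ or $\Omega(m)$ bits depending on the base — yields the additive loss of $\Theta(k \log m)$ against a baseline of $\Omega(m)$.

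The base case deserves care: when the chain length drops to a constant, computing $f^{(1)}(v_0)$ (or its parity) from the side that does not know the relevant function must still reveal $\Omega(m)$ bits in the worst case, which is where the fooling set or a rectangle/discrepancy bound on the underlying ``lookup'' relation enters. I would formalize the lookup relation as follows: Bob holds $f^B$ describing $m$ pointers each of $\lceil \log m\rceil$ bits, Alice must learn the parity of the pointer indexed by a value she controls, and a short transcript induces a monochromatic combinatorial rectangle that cannot be too large, forcing $\Omega(m)$ total communication across the whole protocol tree.

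The main obstacle I anticipate is bookkeeping the interaction between the round-alternation constraint and the Bob-first requirement: each round-elimination step must be shown to preserve the ``Bob speaks first'' structure on the reduced instance while correctly decrementing $k$, and I must verify that the shrinkage of the universe size $m \to m'$ at each step is only additive in $\log m$ rather than multiplicative, so that the total loss telescopes to $\Theta(k \log m)$ and not something larger. I would lean on the standard analysis in Nisan and Wigderson~\cite{NW93}, which is precisely where this bound originates, to handle the averaging argument and the precise accounting, rather than reproving the round-elimination machinery from scratch.
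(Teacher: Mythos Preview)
The paper does not prove this lemma: it is stated with attribution to Nisan and Wigderson~\cite{NW93} and used as a black box, with no argument given in the paper itself. So there is no ``paper's own proof'' to compare against beyond the citation.

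Your sketch is a reasonable outline of the original Nisan--Wigderson argument, which does proceed by a round-elimination style induction: fix the first (Bob) message to a popular value, argue that this cannot pin down enough of $f^B$ to let Alice follow the pointer, and recurse with one fewer round and the roles effectively swapped. A couple of points to tighten if you actually write it up: the reduction in~\cite{NW93} does not shrink the vertex set $m$ at each step (your ``$m \to m'$'' framing is not quite how the accounting goes); rather, the $k\log m$ loss arises because each of the $k$ messages can ``name'' at most $O(\log m)$ bits worth of the relevant pointer, and the remaining $\Omega(m)$ must come from somewhere. Also, the base case you describe as needing $\Omega(m)$ bits is for the index/lookup problem where one side holds an $m$-bit string and the other an index, not merely $\Omega(\log m)$. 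With those corrections the outline matches the cited source; for the purposes of this paper, simply citing~\cite{NW93} as the authors do is sufficient.
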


\begin{lemma}[{\cite{NW93}}]\label{lemma:communication-hierarchy-prob}
    $\C[B, k]{\PJ_{k, m}}[1/3] = \Omega(\frac{m}{k^2} - k \log{m})$ for any $k$.
\end{lemma}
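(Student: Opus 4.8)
The plan is to prove this lower bound on $\C[B, k]{\PJ_{k, m}}[1/3]$ by \emph{round elimination} on a hard input distribution, following the method of Nisan and Wigderson. I would fix the distribution $\mu$ under which $f^A$ and $f^B$ are chosen uniformly and independently (each of the $2m$ pointers uniform over its range), with the start vertex $v_0 \in V_A$ fixed. Under $\mu$ the final vertex $f^{(k)}(v_0)$ is uniform, so the output parity bit of $\PJ_{k, m}$ is perfectly balanced, and by Yao's principle it suffices to lower bound the \emph{distributional} $k$-round complexity with Bob sending first and error $1/3$ under $\mu$. The source of hardness is that Bob sends the first message, yet the first pointer to be followed is $v_1 = f^A(v_0) \in V_B$, which is known only to Alice; Bob's opening message therefore cannot depend on which of his pointers is relevant, so it is ``uninformed''.

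The core is a randomized round-elimination lemma: given a $j$-round protocol (with the uninformed player opening) that computes depth-$j$ pointer jumping under the induced product distribution with error $\epsilon$, and whose first message has length $t$, one obtains a $(j-1)$-round protocol computing depth-$(j-1)$ pointer jumping, on an instance of the same vertex size $m$, with error at most $\epsilon + O(\sqrt{t/m})$. The proof of this step is information-theoretic: since the opening message $M$ is a function of the sender's input alone and is independent of the uniformly random ``active'' pointer, conditioning on $M$ changes the distribution of the one relevant coordinate by only $O(\sqrt{t/m})$ in statistical distance on average (bound the mutual information between $M$ and the active coordinate by $t/m$ per coordinate and apply Pinsker), so the message may be fixed, the first round deleted, and the remaining protocol reinterpreted as one level shallower with the roles of Alice and Bob swapped. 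The swap is exactly what matters: after following $v_1$, the subproblem is depth $k-1$ starting in $V_B$, whose first active pointer $v_2 = f^B(v_1)$ is known to Bob but not to the new opener Alice, so the opening message is again uninformed and the lemma can be iterated.

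I would then argue by a dichotomy. If every one of the $k$ messages has length at most $\delta m / k^2$ for a small absolute constant $\delta$, then iterating the lemma $k$ times keeps the error below $1/3 + k \cdot O(\sqrt{\delta / k^2}) = 1/3 + O(\sqrt{\delta}) < 1/2$, and leaves a $0$-round protocol that must guess the balanced parity bit with error below $1/2$ under $\mu$, a contradiction. Hence some message has length greater than $\delta m / k^2$, so the total communication is $\Omega(m/k^2)$; the additive $-k\log m$ slack absorbs the $O(\log m)$ bits per round that merely transmit a pointer's name (the cost of the matching correct-order protocol) and correctly renders the bound vacuous once $m = O(k^3 \log m)$. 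The main obstacle is establishing the round-elimination lemma with exactly the $O(\sqrt{t/m})$ error amplification: one must set up the induced distributions so that they stay product-like across all $k$ iterations, track that the error accumulates only additively, and verify that the active coordinate really is independent of the opening message at every stage. Translating the per-round error budget of $1/k$ into the message-length threshold $m/k^2$ is precisely where the quadratic dependence on $k$ is born.
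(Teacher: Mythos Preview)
The paper does not give its own proof of this lemma: it is stated with the citation \cite{NW93} and used as a black box. There is therefore nothing in the paper to compare your argument against beyond the reference itself.

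That said, your outline is the standard Nisan--Wigderson round-elimination argument and is essentially correct. The key structural point---that with $v_0\in V_A$ and Bob opening, the first relevant coordinate $v_1=f^A(v_0)$ lies on Alice's side, so Bob's opening message is independent of which of his $m$ pointers matters---is exactly what drives the $O(\sqrt{t/m})$ error increment via mutual information and Pinsker, and the swap of roles after each elimination step is handled correctly. One small caveat: your explanation of the additive $-k\log m$ term as ``absorbing the $O(\log m)$ bits per round that transmit a pointer's name'' is more heuristic than the actual source of that term in the analysis; in Nisan--Wigderson it arises because each elimination step may cost an additive $O(\log m)$ (to name the now-fixed active vertex when re-embedding the smaller instance), and this is deducted from the lower bound rather than from the protocol's cost. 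This does not affect the correctness of the bound or of your overall plan.
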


Note that there is a well-know connection between communication complexity and
$\OBDD$ complexity.
\begin{lemma}[see for example \cite{Kha16}]
    Let $h(x_1, \dots, x_q, y_1, \dots, y_m)$ be a Boolean function, $\pi$ be
    an order over the variables $x_1, \dots, x_q, y_1, \dots, y_m$ such that
    $y_i$ precedes $x_j$ for any $i \in \set{1, \dots, q}$, and 
    $j \in \set{1, \dots, m}$.

    If there is a $\OBDD[k][\pi]$ representing $h$ of width $2^w$, then there is a
    $(2k - 1)$-round communication protocol for $h$ of cost $w$  such that Alice has  $x_1, \dots, x_q$, Bob has $y_1, \dots, y_m$ and sends
    a first message.
\end{lemma}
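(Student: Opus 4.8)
The plan is to establish the standard translation from width-restricted $\OBDD[k]$s to bounded-round communication protocols. The statement concerns an order $\pi$ in which all $y$-variables come before all $x$-variables, and we want to convert a $\OBDD[k][\pi]$ of width $2^w$ into a $(2k-1)$-round protocol of cost $w$ with Bob speaking first. The key structural observation is that a $\OBDD[k][\pi]$ consists of $k$ consecutive layers, each of which is a single-pass ordered read of all $2$ groups of variables in the order $\pi$; within each layer, the program first reads all the $y$-variables (Bob's input) and then all the $x$-variables (Alice's input), because $\pi$ places every $y_i$ before every $x_j$.

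First I would make precise how the width bound gives a small message. At any level of the program the computation state is captured by which of the (at most) $2^w$ nodes is currently active; thus the entire configuration can be encoded by $w$ bits. The protocol simulates the traversal of the program node by node, and whenever control passes from a block of variables owned by one player to a block owned by the other, the player currently simulating sends the identity of the current node --- that is, $w$ bits --- to the other. Because the order within each of the $k$ layers reads $y$'s (Bob) and then $x$'s (Alice), one full layer corresponds to Bob computing, handing off to Alice, and Alice computing; so a single layer uses one Bob-to-Alice message followed by the boundary into the next layer's $y$-block, which is an Alice-to-Bob handoff. Counting the handoffs across $k$ layers, I would argue that the number of alternations is exactly $2k-1$: each of the $k$ layers contributes a Bob$\to$Alice switch, and the $k-1$ seams between consecutive layers each contribute an Alice$\to$Bob switch, giving $k + (k-1) = 2k-1$ rounds, with Bob sending the first message since the first variable block in $\pi$ belongs to Bob.

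Next I would verify correctness and cost. Bob begins by reading his bits $y_1,\dots,y_m$ through the first layer's $y$-block, arriving at some node which he encodes in $w$ bits and sends; Alice continues the simulation on her $x$-bits, sends the resulting node at the next seam, and so on, the players strictly alternating. Since each party, when it is their turn, has exactly the input bits the program queries in the corresponding block, the simulation faithfully reproduces the unique computation path of $P$, and the accept/reject decision read off the final node equals $h(x,y)$. Each message has length $w$ (the encoding of a node among $2^w$), and there are $2k-1$ messages, so the protocol has $2k-1$ rounds and cost $w$ per the convention that cost counts the message length, matching the claim.

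The step I expect to require the most care is the exact bookkeeping of the round count and who speaks first, which hinges on the precise ordering hypothesis ``$y_i$ precedes $x_j$ for all $i,j$''. This hypothesis guarantees that within each layer there is exactly one Bob-block immediately followed by exactly one Alice-block, so there is no spurious extra alternation inside a layer; the only other alternations are the $k-1$ layer boundaries. I would be careful to note that the seam after the final layer does not produce an extra message (the computation simply terminates at the accepting/rejecting sink), which is why the total is $2k-1$ rather than $2k$, and that the very first block being a $y$-block forces Bob to send first.
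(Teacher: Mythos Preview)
The paper does not actually prove this lemma; it is stated with a citation to \cite{Kha16} and used as a black box, so there is no in-paper argument to compare against. Your simulation argument is the standard one and is correct: split each of the $k$ layers into its $y$-block (Bob) and $x$-block (Alice), hand off the current node identifier at each block boundary, and count $k$ Bob$\to$Alice switches plus $k-1$ Alice$\to$Bob seams for $2k-1$ rounds with Bob speaking first.

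One point worth flagging: the quantity you call ``cost'' is the length of a single message ($w$ bits), whereas the usual convention in communication complexity is total bits, which here would be $(2k-1)w$. You noticed this and chose the per-message reading to match the lemma as stated. That reading is consistent with how the paper uses the lemma in Corollary~\ref{cr:pj-lower} (it combines directly with the $\Omega(m-k\log m)$ bound from \cite{NW93} to get a width lower bound of $2^{\Omega(m-k\log m)}$), so your interpretation is the intended one, but it would be worth stating explicitly in a write-up that ``cost'' here denotes the maximum message length rather than the total transcript length.
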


The next corollary follows from the previous three lemmas. 
\begin{corollary}\label{cr:pj-lower}
    For any positive integer $k$ and order $\pi$, such that the variables encoding
    $f_A$ precedes the variables encoding $f_B$,
    \begin{itemize}
        \item width of any $\POBDD[k][\pi]$ representing $\PJ_{2k - 1, m}$ with
            bounded error is at least $2^{\Omega(\frac{m}{k^2} - k
            \log m)}$ and 
        \item width of any $\OBDD[k][\pi]$ representing $\PJ_{2k - 1, m}$ is at
            least $2^{\Omega(m - k \log{m})}$.
    \end{itemize}
\end{corollary}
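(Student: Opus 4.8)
The statement is asserted to follow from the three preceding lemmas, so the plan is to chain them, taking care of the index and round shift introduced by the reduction. I would take any $\OBDD[k][\pi]$ (respectively any bounded error $\POBDD[k][\pi]$) representing $\PJ_{2k - 1, m}$ of width $2^w$, where $\pi$ reads the variables encoding $f_A$ before those encoding $f_B$, and lower-bound $w$.

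First I would fix the communication players so that the reduction lemma applies. Let the player reading the first block of $\pi$ (the $f_A$-encoding) play the role of Bob, who sends the opening message, and let the other player (holding the $f_B$-encoding) play Alice. With this assignment the ordering hypothesis that Bob's variables precede Alice's variables is met, so instantiating the connection lemma with layer count $k$ on the width-$2^w$ program produces a $(2k - 1)$-round protocol for $\PJ_{2k - 1, m}$ of cost $w$ in which Bob speaks first.

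Next I would feed this protocol into the two communication lower bounds, read at round parameter $2k - 1$. Lemma~\ref{lemma:communication-hierarchy-determenistic} gives $\C[B, 2k - 1]{\PJ_{2k - 1, m}} = \Omega(m - (2k - 1)\log m)$ and Lemma~\ref{lemma:communication-hierarchy-prob} gives $\C[B, 2k - 1]{\PJ_{2k - 1, m}}[1/3] = \Omega\!\left(\frac{m}{(2k - 1)^2} - (2k - 1)\log m\right)$. Since the cost $w$ of the protocol produced above cannot beat the corresponding communication complexity, we get $w = \Omega(m - (2k - 1)\log m)$ in the deterministic case and $w = \Omega\!\left(\frac{m}{(2k - 1)^2} - (2k - 1)\log m\right)$ in the probabilistic case. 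Using $2k - 1 = \Theta(k)$ to absorb the round count, these simplify to $w = \Omega(m - k\log m)$ and $w = \Omega\!\left(\frac{m}{k^2} - k\log m\right)$, whence the width $2^w$ is at least $2^{\Omega(m - k\log m)}$ and $2^{\Omega(m/k^2 - k\log m)}$, as claimed.

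The only delicate point is the simultaneous substitution $k \mapsto 2k - 1$: a $k$-layer program yields a $(2k - 1)$-round protocol, so the communication lemmas must be applied to $\PJ_{2k - 1, m}$ at round bound $2k - 1$, and one must keep the function index and the round index in lockstep before simplifying via $2k - 1 = \Theta(k)$. One should also double-check that the chosen Bob/Alice assignment, combined with the ``$f_A$ before $f_B$'' condition on $\pi$, genuinely realizes the ``sends first'' direction demanded by the lower bounds; this matching of the order condition to the speaking direction is where the argument could silently break if the two players were swapped.
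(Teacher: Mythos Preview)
Your proposal is correct and takes essentially the same approach as the paper, which offers no explicit argument beyond ``follows from the previous three lemmas.'' You have simply made that chaining explicit: simulate the width-$2^w$ $\OBDD[k][\pi]$ by a $(2k-1)$-round protocol via the connection lemma, then invoke Lemmas~\ref{lemma:communication-hierarchy-determenistic} and~\ref{lemma:communication-hierarchy-prob} at round parameter $2k-1$ and absorb $2k-1=\Theta(k)$. Your closing caveat about matching the $f_A$/$f_B$ split to the ``Bob sends first'' convention is exactly the point one must verify and is the only place the paper leaves room for ambiguity.
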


\begin{lemma}\label{lm:dpj}
    There is a commutative $\OBDD[2k]$ representing $\PJ_{2k - 1, m}$ of width
    $O(n^2)$, where $n=2m\ceil{\log m}$ is a length of an input.
\end{lemma}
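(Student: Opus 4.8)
The plan is to give a direct construction: a commutative $\OBDD$ that performs the $2k-1$ pointer jumps one at a time, using one layer (one full read of the input) per jump. Throughout, the state of the program after the $t$-th layer will encode the vertex $f^{(t)}(v_0)$ as an $\ceil{\log m}$-bit index. Since $v_0 \in V_A$, the odd jumps follow pointers of $f^A$ and the even jumps follow pointers of $f^B$, so the layers alternately ``use'' the $f^A$-bits and the $f^B$-bits, reading all $n = 2m\ceil{\log m}$ variables in each layer (in some fixed order, e.g. the natural one) and ignoring the bits belonging to the function not used in that jump.

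First I would describe a single jump-layer, say one that uses $f^A$. At the start of the layer the current vertex $v$ is determined by the incoming state and stays fixed for the whole layer. The layer maintains in its state a pair $(v, w)$, where $w$ is a partially built $l$-bit target register ($l = \ceil{\log m}$) initialized to $0$. Recall $f^A$ is encoded blockwise, with block $i$ holding $\bin{z^A_{i, 1}, \dots, z^A_{i, l}} = f^A(v_i)$. When the layer reads a bit $z^A_{i, j}$ it applies the rule: if $i = v$ then set the $j$-th bit of $w$ equal to this input bit, and otherwise do nothing; every bit of $f^B$ is a no-op. After the layer has read all variables, $w = f^A(v)$, which becomes the current vertex passed to the next layer. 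The parity output of $\PJ_{2k - 1, m}$ is then the parity of the bits of the register $w$ produced in the last (the $(2k-1)$-th) layer, read off the final state.

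The heart of the argument, and the step I expect to be the main obstacle, is verifying commutativity, i.e. that each layer computes the same partial function under every reordering $\pi'$ of its variables, as required by the definition of a commutative $\OBDD[2k]$ and by Corollary~\ref{cr:total-reordering}. The key observation is that the transition rule attached to a variable $z^A_{i, j}$ --- ``conditionally write the input bit into position $j$ of $w$ when $i = v$'' --- depends only on the identity $(i, j)$ of the variable and on $v$, which is never modified inside the layer, and the elementary operations it performs (setting a fixed bit of $w$, or doing nothing) pairwise commute. Hence the accumulated value $w = f^A(v)$ at the end of the layer is independent of the order in which the variables are scanned, so reordering the transition function according to any $\pi'$ yields the same function; each layer, and therefore the whole program, is commutative. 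Note that one cannot instead exploit a convenient block-contiguous order to save space, precisely because commutativity must hold for all orders; this is what forces us to retain $v$ throughout each layer.

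Finally I would bound the width and the number of layers. A layer's state ranges over pairs $(v, w)$ with $v \in \set{1, \dots, m}$ and $w$ an $l$-bit register, giving width at most $m \cdot 2^{\ceil{\log m}} = O(m^2)$; since $m = \Theta(n / \log n)$, this is $O(n^2 / \log^2 n) = O(n^2)$. The construction uses $2k - 1$ jump-layers, hence reads each variable $2k - 1$ times; adjoining a single trivial (and trivially commutative) read-all layer exhibits it as a commutative $\OBDD[2k]$ of width $O(n^2)$, as claimed.
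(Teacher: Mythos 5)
Your construction is correct and is essentially the paper's own proof: one layer per jump, state of the form (current vertex, partially accumulated target vertex), with all bits outside the relevant block acting as no-ops, and commutativity following because the per-bit accumulation operations commute; the only cosmetic difference is that the paper accumulates the target as a sum $\sum_j 2^{j-1}x^v_j \bmod m$ while you set bits of a register, which is the same mechanism. The width bound $O(m^2)=O(n^2)$ and the chaining of layers (the paper phrases it as a small composition observation for commutative layers) also match.
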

\begin{proof} 
    First of all, let us consider two stets $T_1,T_2\in\{0,\dots,w-1\}$ for some integer $w$. Let us note that if a function $h : \bool^n \to T_1$ has a commutative
    $\OBDD[k']$ and for all $t \in T_1$, a function $h'_t : \bool^n \to T_2$ has a
    commutative $\OBDD$. Then a function $h'' : \bool^n \to T_2$ such that $h''(x) =
    h'_{h(x)}(x)$ has a commutative $\OBDD[(k' + 1)]$.

    Secondly, let us describe a commutative OBDD $P$ for computing $f(v)$. The program has $m$ different branches for each value of $v$. The $z$-th brunch for $v=z$ has $m$ nodes on each level. Each node corresponds to each of $m$ possible results of $f$. Initially the brunch is in node that corresponds to $0$-value of the function. The brunch skips all variables except variables $x^z_1,\dots x^z_{\lceil\log_2 m\rceil}$ that represent value $f(z)$. This brunch computes $u=\sum\limits_{j=1}^{\lceil\log_2 m\rceil}2^{j-1}x^z_j$ $(mod$ $m)$ and reaches a node that corresponds to $u$. The result is $f(z)=u$. Note that skipping is adding $0$ to $u$.
    
In the initial step, $P$ starts on one of the brunches that corresponds to $v$. Each bunch has width $O(m)$ and $P$ has width $O(m^2)=O(n^2)$.

It is easy to see that each brunch is commutative OBDD and whole quantum OBDD $P$ is also commutative because the sum is commutative operation. Therefore, a permutation of transition functions does not affect the result.    
\end{proof}

\begin{corollary}\label{cr:dpj-upper}
    There is a $\OBDD[2k]$ representing $\RPJ_{2k - 1, n}$ of width $O(n^3)$, where $n=2m\ceil{\log m}$ is a length of an input.
\end{corollary}
\begin{proof}
The claim follows from Lemma \ref{lm:dpj} and Theorem \ref{th:deord-classic}.
\end{proof}
Using this results we can extend the hierarchy for following classes:
$\P{\OBDD[k]}$, $\BPP{\beta}{\OBDD[k]}$, $\SUPERPOLY{\OBDD}$,
$\BSUPERPOLY{\beta}{\OBDD[k]}$, $\SUBEXP{\alpha}{\OBDD[k]}$, and
$\BSUBEXP{\alpha}{\beta}{\OBDD[k]}$. These are classes of Boolean functions
computed by the following models:
\begin{itemize}
    \item $\P{\OBDD[k]}$ and  $\BPP{\beta}{\OBDD[k]}$ are for polynomial width
        $\OBDD[k]$, the first one is for deterministic case and the second one
        is for bounded error probabilistic $\OBDD[k]$ with error at most
        $\beta$.
    \item $\SUPERPOLY{\OBDD[k]}$ and $\BSUPERPOLY{\beta}{\OBDD[k]}$ are similar
        classes for superpolynomial width models.
    \item $\SUBEXP{\alpha}{\OBDD[k]}$ and $\BSUBEXP{\alpha}{\beta}{\OBDD[k]}$ are
        similar classes for width at most $2^{O(n^\alpha)}$, for $0 < \alpha <
        1$.
\end{itemize}

Note that, a size $s$ of an $\OBDD[k]$ is such that $w\leq s\leq n\cdot w$, where $w$ is width of the  $\OBDD[k]$ and $n$ is length of an input. That is why, if an $\OBDD[k]$ has polynomial width, then it has polynomial size. The same situation with superpolynomial and subexponential width and size.
\begin{theorem}\label{thm:hierarchy}
    \begin{enumerate}
        \item $\P{\OBDD[k]} \subsetneq \P{\OBDD[2k]}$, for $k = o(n/\log^3 n)$.
        \item $\BPP{1/3}{\OBDD[k]} \subsetneq \BPP{1/3}{\OBDD[2k]}$, for 
            $k = o(n^{1/3}/\log n)$.
        \item $\SUPERPOLY{\OBDD[k]} \subsetneq \SUPERPOLY{\OBDD[2k]}$, for 
            $k = o(n^{1-\delta})$, $\delta > 0$. 
        \item $\BSUPERPOLY{1/3}{\OBDD[k]} \subsetneq 
                   \BSUPERPOLY{1/3}{\OBDD[2k]}$, for 
              $k = o(n^{1 / 3 - \delta})$ and $\delta > 0$.
            
          \item $\SUBEXP{\alpha}{\OBDD[k]} \subsetneq \SUBEXP{\alpha}{\OBDD[2k]}$, for 
              $k = o(n^{1 - \delta})$,  $1 > \delta > \alpha + \varepsilon$,
              and $\varepsilon > 0$. 
          \item $ \BSUBEXP{\alpha}{1/3}{\OBDD[k]} \subsetneq
                  \BSUBEXP{\alpha}{1/3}{\OBDD[2k]}$, for 
                $k = o(n^{1 / 3 - \delta / 3})$, $1 / 3 > \delta > \alpha +
                \varepsilon$, and $\varepsilon > 0$.
    \end{enumerate}
\end{theorem}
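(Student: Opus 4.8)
The plan is to prove all six separations with a single witness, the total reordered pointer-jumping function $\RPJ_{2k-1,m}$ furnished by Corollary~\ref{cr:total-reordering}, and to read off the six parameter ranges from the same pair of bounds. First I would settle membership on the ``$2k$'' side: by Corollary~\ref{cr:dpj-upper} there is a deterministic $\OBDD[2k]$ for $\RPJ_{2k-1,m}$ of width polynomial in the input length $n=\Theta(m\log^2 m)$. Since a deterministic $\OBDD[2k]$ is in particular a bounded-error $\POBDD[2k]$, and since polynomial width sits inside the superpolynomial and subexponential width thresholds, this one upper bound places $\RPJ_{2k-1,m}$ on the $2k$-layer side of every class in the theorem.

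The heart of the argument is the lower bound on the $k$-layer side, and this is where the reordering method does the essential work. Corollary~\ref{cr:pj-lower} only bounds the width of $\PJ_{2k-1,m}$ under orders in which the $f_A$-variables precede the $f_B$-variables, so a naive $\PJ$ witness could be cheap under some other order. Applying the reordering method (Theorem~\ref{th:d-obdd} for the deterministic model, and the same input-restriction argument for the bounded-error model) converts these order-restricted bounds into bounds valid for every order: every $\OBDD[k]$ representation of $\RPJ_{2k-1,m}$ has width at least $2^{\Omega(m-k\log m)}$, and every bounded-error $\POBDD[k]$ representation has width at least $2^{\Omega(m/k^2-k\log m)}$. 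This all-orders guarantee is the crux, and obtaining it is exactly why the reordered function is the right object rather than $\PJ$ itself.

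With both bounds in hand, each separation reduces to choosing $m$, and in the superpolynomial and subexponential cases the amount of padding via Remark~\ref{remark:padding}, so that the $k$-layer lower bound clears the width threshold of the class while the $2k$-layer upper bound stays below it. For the two polynomial cases I would take $m=\Theta(n/\log^2 n)$ with no padding: in the deterministic case $2^{\Omega(m-k\log m)}$ is superpolynomial exactly when $k\log m=o(m)$, i.e. $k=o(n/\log^3 n)$; in the probabilistic case the $1/k^2$ factor forces $k^3\log m=o(m)$, which shrinks the range to $k=o(n^{1/3}/\log n)$. The four remaining parts run the identical comparison against the superpolynomial and subexponential thresholds, now tuning the padding so that the effective parameter $m$ is positioned to push the $k$-layer lower bound just past the threshold while the polynomial $2k$-layer upper bound stays under it; the side conditions relating $\delta$, $\alpha$, and $\varepsilon$ are precisely what keep the exponent $m-k\log m$ (respectively $m/k^2-k\log m$) above $n^\alpha$ after this choice.

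I expect the main obstacle to be bookkeeping the $-k\log m$ loss term uniformly across all six regimes: it is harmless in the deterministic polynomial case, but combined with the $1/k^2$ factor it is what dictates the cube-root thresholds in the probabilistic cases and the interplay between $\delta$ and $\alpha$ in the subexponential cases. The conceptual content, the all-orders lower bound, is supplied entirely by the reordering theorem applied to Corollary~\ref{cr:pj-lower}; the remaining work is the routine but careful verification that the chosen $m$ keeps the two bounds on opposite sides of each threshold.
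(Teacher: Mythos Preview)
Your proposal is correct and follows essentially the same approach as the paper: the witness is $\RPJ_{2k-1,m}$, membership on the $2k$ side comes from Corollary~\ref{cr:dpj-upper}, and the all-orders lower bound on the $k$ side is obtained by combining Corollary~\ref{cr:pj-lower} with the reordering transfer (Theorem~\ref{th:d-obdd}, together with its straightforward probabilistic analogue, which you correctly note follows by the same input-restriction argument). The paper's own proof writes out only case~1 and declares the remaining five analogous, so your plan to trace all six parameter regimes is in fact more detailed than what appears there. One minor remark: the padding via Remark~\ref{remark:padding} that you propose for cases 3--6 is not invoked in the paper and is not actually needed --- once $k\log m = o(m)$ the lower bound is already $2^{\Omega(m)} = 2^{\Omega(n/\log^2 n)}$, which clears the superpolynomial threshold and every $2^{O(n^\alpha)}$ threshold with $\alpha<1$; the side conditions on $\delta$, $\alpha$, $\varepsilon$ in the statement are simply sufficient to guarantee that domination.
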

\begin{proof}
    Proofs of all statements are the same. Here we present only a proof of  the first one.

    Let us consider $\RPJ_{2k-1,n}$. Every $\OBDD[k]$ representing the
    function has width at least
    \[
        2^{\Omega(n/(k\log n) - \log (n / \log n))} \ge \\
        2^{\Omega(n / (n\log^{-3} n \log n)- \log (n/\log n))} = \\
        2^{\Omega(\log^2n)} = n^{\Omega(\log n)},
    \]
    due to Corollary \ref{cr:pj-lower} and Theorem \ref{th:d-obdd}.
    Therefore, it has more than polynomial width. Hence, $\RPJ_{2k - 1, n}
    \not\in \P{\OBDD[k]}$ and $\RPJ_{2k - 1,n} \in \P{\OBDD[2k]}$, due to
    Corollary~\ref{cr:dpj-upper}.
\end{proof}

	\section{Conclusion}\label{sec:conc}
In the paper, we present a new technique (``reordering'') for proving lower bounds and upper bounds for OBDD with an arbitrary order of variables if we have similar bounds for the natural order. Based on this technique we show the almost exponential gap between complexity (width) of quantum and deterministic OBDDs. The separation is better that existing once that $\PERM_n$ and $\MOD_{p, n}$ functions provide. The gap is $2^{\Omega(\frac{n}{\log n})}$ vs $O(\frac{n^2}{\log^2 n})$.
As an alternative function we present  \textit{shifted equality function}  that also demonstrates almost exponential gap between deterministic and quantum complexity. The gap is $2^{\Omega(n)}$ vs $O(n^2)$.

The open problem is to suggest a function that demonstrates a complexity gap $2^{\Omega(n)}$ vs $O(n)$.

Based on the ``reordering'' technique we present a function that allow us to prove a width hierarchy for classes of Boolean functions computed by bounded error quantum $\OBDD$s. It is tight in the case of $o(\log n)$ width and not tight in the case of $2^{O(n)}$ width.

The open problem is to suggest a tight width hierarchy for  $2^{O(n)}$ width.

We show  $\P{\OBDD[k]} \subsetneq \P{\OBDD[2k]}$ hierarchy by a number of tests for polynomial size deterministic $\OBDD[k]$s  for $k = o(n / \log^3 n)$. It improves existing hierarchies from \cite{bssw96,Kha16}. 

The open problem is to prove the tight hierarchy for $k = \omega(\sqrt{n} \log^{3 / 2} n)$.

We show the similar hierarchy for probabilistic oblivious $\BP[k]$ that improves result from \cite{hs2003}.

The open problem is to prove the tight hierarchy for $k = \omega(\log \frac{n}{3})$.

    \section*{Acknowledgements.}
   This paper has been supported by the Kazan Federal University Strategic Academic Leadership Program ("PRIORITY-2030").

We thank Alexander Vasiliev and Aida Gainutdinova from Kazan Federal
University and Andris Ambainis from University of Latvia for their helpful
comments and discussions.

    \bibliographystyle{splncs04}
    \bibliography{main}
\end{document}